\newtheorem{proposition}[theorem]{Proposition}
\newcommand{\auth}{\textsf}
\newcommand{\book}{\textsl}
\newcommand{\jour}{\textit}
\newcommand{\vol}{}
\newcommand{\nr}[1]{(#1)}
\newcommand{\edt}[1]{(\auth{#1}, ed.)}
\newcommand{\eds}[1]{(\auth{#1}, eds.)}
\newcommand{\pub}[2]{ (#1, #2)}
\newcommand{\yr}[1]{, #1}
\newcommand{\pp}[2]{, #1--#2}
\newcommand{\suc}{\mbox{$\mathit{Suc}$}}
\newcommand{\betw}{\mbox{$\mathit{Bet}$}}
\newcommand{\thr}{\mbox{$\mathit{Th}$}}
\newcommand{\thtl}{\mbox{$\mathit{ThTL}$}}
\newcommand{\bthtl}{\mbox{$\mathit{BThTL}$}}
\newcommand{\invtl}{\mbox{$\mathit{InvTL}$}}
\newcommand{\binvtl}{\mbox{$\mathit{BInvTL}$}}
\newcommand{\utlth}{\mbox{$\mathit{\thtl}[\fut,\past]$}}
\newcommand{\btlth}{\mbox{$\mathit{\bthtl}[\fut,\past]$}}
\newcommand{\utlinv}{\mbox{$\mathit{\invtl}[\fut,\past]$}}
\newcommand{\utl}{\mbox{$\mathit{TL}[\fut,\past]$}}
\newcommand{\btlinv}{\mbox{$\mathit{\binvtl}[\fut,\past]$}}
\newcommand{\until}{\textsf{U}}
\newcommand{\since}{\textsf{S}}
\newcommand{\fut}{\textsf{F}}
\newcommand{\past}{\textsf{P}}
\newcommand{\nextt}{\textsf{X}}
\newcommand{\prev}{\textsf{Y}}
\newcommand{\gfut}[1]{\mbox{$\fut_{#1}$}}
\newcommand{\gpast}[1]{\mbox{$\past_{#1}$}}
\newcommand{\ltl}{\mbox{$\mathit{LTL}$}}
\newcommand{\ltlbin}{\mbox{$\ltl[\until,\since]$\/}}
\newcommand{\cltlbin}{\mbox{$C\ltlbin$\/}}
\newcommand{\ltlun}{\mbox{$\ltl[\fut,\past]$\/}}
\newcommand{\ltlunsuc}{\mbox{$\ltl[\fut,\past,\nextt,\prev]$\/}}
\newcommand{\ltlunsucexp}{\mbox{$\ltl[\fut,\past,\nextt^{2^n},\prev^{2^n}]$\/}}
\newcommand{\defn}{\mathrel{\mbox{$~\stackrel{\rm def}{=}~$}}}
\newcommand{\such}{.~}
\newcommand{\limplies}{\rightarrow}
\newcommand{\eqvt}{\leftrightarrow}
\newcommand{\orover}{{\displaystyle \bigvee}}
\newcommand{\andover}{{\displaystyle \bigwedge}}
\newcommand{\fo}{\mbox{$\mathit{FO}$}}
\newcommand{\fotwo}{\mbox{$\fo^2$\/}}
\newcommand{\fotwoless}{\mbox{$\fotwo[<]$\/}}
\newcommand{\fotwosuc}{\mbox{$\fotwo[<,+1]$\/}}
\newcommand{\fotwothr}{\mbox{$\fotwo[<,\thr]$\/}}
\newcommand{\fotwobet}{\mbox{$\fotwo[<,\betw]$\/}}
\newcommand{\pspace}{\mbox{$\mathit{PSPACE}$}}
\newcommand{\expspace}{\mbox{$\mathit{EXPSPACE}$}}
\newcommand{\nexptime}{\mbox{$\mathit{NEXPTIME}$}}
\newcommand{\np}{\mbox{$\mathit{NP}$}}
\begin{document}

\title{Two-variable Logic with a Between Predicate}

\author[1]{Andreas Krebs}
\affil[1]{Universit\"at T\"ubingen, krebs@informatik.uni-tuebingen.de}
\author[2]{Kamal Lodaya}
\affil[2]{The Institute of Mathematical Sciences, Chennai, kamal@imsc.res.in} 
\author[3]{Paritosh Pandya}
\affil[3]{Tata Institute for Fundamental Research, pandya@tifr.res.in}
\author[4]{Howard~Straubing}
\affil[4]{Boston College, straubin@bc.edu}

\Copyright{Andreas Krebs, Kamal Lodaya, Paritosh Pandya, Howard Straubing}
\keywords{automata theory, algebra, finite model theory, modal and temporal logics, computational complexity, verification}
\ArticleNo{p}
\DOIPrefix{}
\maketitle

\begin{abstract}
We study an extension of  $FO^2[<],$ first-order logic interpreted in finite words, in which formulas are restricted to use only two variables.  We adjoin to this language two-variable atomic formulas that say, `the letter $a$ appears between positions $x$ and $y$'.  This is, in a sense, the simplest property that is not expressible using only two variables.  

We present several logics, both first-order and temporal, that have the same expressive power, and find matching lower and upper bounds for the complexity of satisfiability for each of these formulations.  We also give an  effective necessary condition, in terms of the syntactic monoid of a regular language, for a property to be expressible in this logic. We show that this condition is also sufficient for words over a two-letter alphabet. This algebraic analysis allows us us to prove, among other things, that our new logic has strictly less expressive power than full first-order logic $FO[<].$
\end{abstract}

\section{Introduction}

We denote by $FO[<]$ first-order logic with the order relation $<$, interpreted in finite words over a finite alphabet $A.$  Variables in first-order formulas are interpreted as positions in a word, and for each letter  $a\in A$ there is a unary predicate $a(x),$ interpreted to mean `the letter in position $x$ is $a$'. Thus sentences in this logic define properties of words, or, what is the same thing, languages $L\subseteq A^*.$ The logic $FO[<]$ over words has been extensively studied, and has many equivalent characterizations in terms of temporal logic, regular languages, and the algebra of finite semigroups. (See, for instance,~\cite{Str-book,Wilke} and the many references cited therein.)

It is well known that every sentence of $FO[<]$ is equivalent to one using only three variables, but that the family of languages definable with two-variable sentences is strictly smaller \cite{IK}. The fragment $FO^2[<],$ consisting of the two-variable formulas, has also been very thoroughly investigated, and once again, there are many equivalent characterizations \cite{TW}.

The reason $FO^2[<]$ is strictly contained in $FO[<]$ is that one cannot express `betweenness' with only two variables.  More precisely, the following predicate
$$ \exists z(a(z) \wedge x < z\wedge z<y),$$
which asserts that there is an occurrence of the letter $a$ strictly between $x$ and $y,$ is not expressible using only two variables.  Let us denote this  predicate, which has two free variables,  by $a(x,y),$  and the resulting logic by \fotwobet. What properties can we express when we adjoin these new predicates to $FO^2[<]$?  The first obvious question to ask is whether we recover all of $FO[<]$ in this way.  The answer, as we shall see, is `no', but we will give a much more precise description.

The present article is a study of this extended two-variable logic.  Our investigation is centered around two quite different themes.  One theme
investigates several different
logics, based in $FO[<]$ as well as in temporal logic $LTL,$  for expressing this 
betweenness, and establishes their expressive equivalence. We explore the complexity of 
satisfiability checking in these logics as a measure of their descriptive succinctness.

\begin{table*}[t]
\setlength{\aboverulesep}{0pt}
\setlength{\belowrulesep}{0pt}
\setlength{\extrarowheight}{1ex}
 \hspace*{-0.5cm}\begin{tabularx}{15cm}{ll>{\columncolor{gray!50}}lll}
    \toprule
    Complexity/Variety & {\bf Ap} & {\bf W}  & {\bf DA$*$D } & {\bf DA} \\[1ex]
    \midrule[0.08em]
    Nonelementary      & $\fo[<]$ &  &  &  \\[1ex]
    \midrule
    \expspace          & \cltlbin & \fotwothr, & \ltlunsucexp & \\
                       &          & \fotwobet, & (binary~notation) & \\
                       &          & \bthtl, &                & \\
                       &          & \thtl &                & \\[1ex]
    \midrule
    \nexptime          &          &              & \fotwosuc      & \fotwoless \\
                       &          &              &                & (unbounded~alphabet)\\[1ex]
    \midrule
    \pspace            & \ltlbin  & \binvtl,        & \ltlunsuc   & \\
                       &          &          \invtl &             & \\[1ex]
    \midrule
    \np                &          &                &              & \fotwoless \\
                       &          &                &              & (bounded~alphabet), \\
                       &          &                &              & \ltlun \\[1ex] 
    \bottomrule
  \end{tabularx}

\caption{A summary of the results in this paper, in the context of the complexity and expressive power of temporal and predicate logics studied in earlier work. The variety {\bf W} heading the second column coincides with ${\bf M}_e{\bf DA}$ for two-generated monoids, and we conjecture that the two varieties are identical.}\label{tab:results}
\end{table*}

The second theme is devoted to determining, in a sense that we will make precise, the exact expressive power of this logic.  Here we draw on tools from the algebraic theory of semigroups.

Owing to considerations of length, we will for the most part confine ourselves to careful statements of our main results, and provide only outlines of the proofs, omitting some technical details. 

In Section~\ref{BasicProperties} we will give the precise definition of our logic
$\fotwobet$
(although there is not much more to it than what we have written in this Introduction). We introduce a related logic 
\fotwothr\/ which enforces quantitative constraints on counts of letters, and we
show that it has the same expressive power, although it can result in formulas that are considerably more succinct. 
In addition, we introduce two temporal logics,
one qualitative  and one quantitative, but again with the same expressive power as our original formulation.

In Section~\ref{KamalParitosh} we determine the complexity of formula 
satisfiability for each of these logics. 

Section~\ref{AndreasHoward} is devoted to a characterization of the expressive power of this logic in terms of the algebra of finite semigroups. This builds on earlier algebraic studies of the regular languages definable in $FO^2[<]$ \cite{TW}, and makes critical use of the algebraic theory of finite categories, as developed by Tilson~\cite{Tilson}. We find an effective necessary condition for a language to be definable in \fotwobet\/.  We conjecture that this condition is also sufficient, and prove that it is sufficient  for languages over a two-letter alphabet. One consequence is that we are able to determine effectively whether a given formula of $FO[<]$ over a two-letter alphabet is equivalent to a formula in the new logic.  We use these results to show that \fotwobet\/ is strictly less expressive than $FO[<].$ We also provide a detailed study of the quantifier alternation depth (or, what is more or less the same thing, the so-called `dot-depth') of languages definable in this logic.  

Table \ref{tab:results} gives a map of our results 
and compares them to those of previous related work. Etessami {\em et al.}~\cite{EVW} as well as Weis and Immerman  \cite{WI} have explored logics \fotwoless\/ and \fotwosuc, as well as matching temporal logics and their decision complexities. Th\'erien and Wilke~\cite{TW} found characterizations of the expressive power of these same logics, using algebraic methods. We find that our new logics are more expressive but this comes at the cost of some computational power.

Some counting extensions \cltlbin\/ of full $\ltl[\until,\since]$ have been studied by 
Laroussinie {\em et al.} \cite{LMP}, and by Alur and Henziger as discrete time Metric Temporal logic \cite{AH}.

Sketches of the proofs of the main results appear in Section~\ref{proofsketches}.

\section{ Basic Properties}\label{BasicProperties}

\subsection{Definition}
\label{sec:examples}

$FO[<]$ is first-order logic interpreted in words over a finite alphabet $A$, with a unary predicate $a(x)$ for each $a \in A,$ interpreted to mean that the letter in position $x$ is $a.$ If $\phi$ is a sentence of $FO[<],$ then the set of words $w\in A^*$ such that $w\models\phi$ is a language in $A^*,$ in fact a regular language

 For each $a\in A$ we adjoin to this logic a {\it binary} predicate $a(x,y)$ which is interpreted to mean
$$\exists z(x<z\wedge z<y\wedge a(z)).$$
This predicate cannot be defined in ordinary first-order logic over $<$ without introducing a third variable.  We will investigate the fragment \fotwobet, obtained by restricting to formulas that use  both the unary and binary $a$ predicates, along with $<,$ but use only two variables. 
\smallskip

There is an even simpler predicate that is not expressible in two-variable logic that we could have adjoined:  this is the successor relation $y=x+1.$  The logic $FO^2[<]$ supplemented by successor, which we denote by $FO^2[<,+1]$ has also been extensively studied, and the kinds of questions that we take up here for \fotwobet\/ have already been answered for $FO^2[<,+1].$ (See, for example, ~\cite{EVW,LPS,TW}).

\noindent{\bf Example.} The successor relation  $y=x+1$ is itself definable in \fotwobet, by the formula
$$x<y\wedge \bigwedge_{a\in A}\neg a(x,y).$$
As a result, we can define the set $L$  of words over $\{a,b\}$ in which there is no occurrence of two consecutive $b$'s by a sentence of \fotwobet. We can similarly define the set of words without two consecutive $a$'s.  Since we can also say that the first letter of a word is $a$ (by $\forall x(\forall y (x\leq y)\rightarrow a(x))$ and that the last letter is $b,$ we can define the language $(ab)^*$ in \fotwobet. \footnote{In contrast to the usual practice in model theory, we permit our formulas to be interpreted in the empty word:  every existentially quantified sentence is taken to be false in the empty word, and thus every universally quantified sentence is true.} This language is not, however, definable in \fotwo$[<]$.

\smallskip

\noindent{\bf Example.}  Let $L\subseteq\{a,b\}^*$ be the language defined by the regular expression
$$(a+b)^*bab^+ab(a+b)^*.$$
This language is definable in \fotwobet by the sentence
$$\exists x(\exists y(x<y\wedge b(x,y)\wedge\neg a(x,y)\wedge \alpha(y))\wedge\beta(x)),$$
where $\alpha(y)$ is
$$a(y)\land \exists x(x=y+1\wedge b(x))$$
and $\beta(x)$ is
$$a(x)\land \exists y(x=y+1\wedge b(y)).$$
As we shall see further on, this language is not definable in \fotwosuc, so our new logic has strictly more expressive power than \fotwosuc.

\smallskip

\noindent{\bf Example.}
\fotwobet\/ has the ability to represent
an $r$-bit counter (modulo $2^r$) in a word and to assert properties based
on the counter value. This is done by having successive substrings $b_0 \dots b_{r-1}$ of length $r$, with $b_0$ representing the least significant bit 
of the counter, separated by a marker (denoted here by letter $mark$).
We use two letters $0$ and $1$ for the bit values,
which the $b_i$ will range over.
The representation of an $r$-bit constant is described by the $O(r)$ size formula:
$$
mark(x) \land \suc^{r}(x)=b_0 \dots b_{r-1},
$$
where $\suc^i(x)=z$ is defined by
$$
\begin{array}{rl}
\suc^1(x)    =b \defn & \!\!\!\!\exists y \such y=x+1 \land b(y),\\
\suc^{i+1}(x)=bz \defn & \!\!\!\!\exists y \such y=x+1 \land b(y) \land \suc^{i}(y)=z, \text{~for~} i > 0.
\end{array}
$$

Clearly each specific number such as $0$, $2^r-1$ or a threshold value 
can be defined by an $O(r)$ formula.
After the number ends we will have a marker symbol again
to begin the next number. The formula
$mark(x) \land \lnot mark(x,y) \land mark(y)$
jumps from start of one number to the next one.

The $O(r^2)$ formula $EQ$ below checks equality of two numbers
by comparing the $r$ bits in succession.
We use the fact that the bit string always has $r$ bits over
the letters $\{0,1\}$ and
if we do not have $0$ where we expect a bit then we must have $1$.

\[\begin{array}{rl}
EQ(x,y)   \defn &\!\!\!\! mark(x) \land mark(y) \land \andover_{i=1}^{r} EQ_i(x,y),\\
EQ_i(x,y) \defn &\!\!\!\! \suc^i(x)=0 \eqvt \suc^i(y)=0.
\end{array}\]

By small variations of this formula, we can define formulae
$LT$, $GT$ etc, to make other comparisons.
Incrementing the counter modulo $2^r$
is encoded by an $O(r^3)$ formula $INC_1(x,y)$ 
which converts a least significant block of $1$s to $0$s,
using $r$ disjunctions of $O(r^2)$ formulas.
 \[
 \begin{array}{l}
 (\orover_i \suc^i(x)=0 \land \andover_{j<i} \suc^j(x)=1 \limplies  \\
 \hspace*{1cm}
(\suc^i(y)=1 \land \andover_{j<i} \suc^j(y)=0 \land \\
 \hspace*{1cm} \andover_{k>i} \suc^k(x)=0 \eqvt \suc^k(y)=0))\\
  \land (\andover_i \suc^i(x)=1 \limplies \andover_j \suc^j(y)=0) 
 \end{array}
\]
We can also define $INC_c(x,y)$ which checks that the number at position $y$
of the word is obtained by incrementing the number at position $x$ 
by a constant $c$.

\medskip

In contrast, it is quite difficult to find examples of languages definable in $FO[<]$ that are {\it not} definable in \fotwobet.  Much of this paper is devoted to establishing methods for generating such examples.

\subsection{Two-variable Threshold Logic}
\label{sec:twotwo}

We  generalize \fotwobet\/ as follows: Let $k\geq 0$ and $a\in A.$  We define $(a,k)(x,y)$  to mean that $x<y,$ and that there are at least $k$ occurrences of $a$ between $x$ and $y.$   Adding these (infinitely many) predicates gives a new logic $FO^2[<,Th].$  

\smallskip

\noindent{\bf Examples.} The language $STAIR_k$ consists of all words $w$ over $\{a,b,c\}$ which have a subword of the form $a (a+c)^* a $ with at least $k$
occurrences of $a$. This can be specified by sentence $\exists x \exists y ( x<y \land (a,k)(x,y) \land \neg b(x,y))$.  \\
Threshold logic is quite useful in specifying quantitative properties of systems. For example, a bus arbiter circuit may have the property that if $req$ is continuously on for $15$ cycles then there should be at least $3$ occurrences of $ack$. This can be specified by 
$\forall x \forall y ((req,15)(x,y) \rightarrow (ack,3)(x,y))$. 

\smallskip

Since $a(x,y)$ is equivalent to $(a,1)(x,y),$  $FO^2[<,Th]$ is at least as expressive as \fotwobet.  What is less obvious is that the converse is true, albeit at the cost of a large blowup in the quantifier complexity of formulas.

\begin{theorem}\label{thm.invequalsthr}
Considered as language classes,
$$\fotwobet=FO^2[<,Th].$$
\end{theorem}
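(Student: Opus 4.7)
The direction $\fotwobet \subseteq FO^2[<,Th]$ is immediate, since $a(x,y)$ coincides with $(a,1)(x,y)$. For the reverse inclusion, the plan is to translate each atomic predicate $(a,k)(x,y)$ of $FO^2[<,Th]$ into an equivalent $\fotwobet$ formula by induction on $k$, with the Boolean and quantifier cases carried through routinely. Set $\psi_1(x,y) := a(x,y)$. For the inductive step I would start from the semantic identity
$$(a,k+1)(x,y) \;\equiv\; \exists z\,(x < z < y \,\wedge\, a(z) \,\wedge\, (a,k)(z,y)),$$
i.e.\ peel off the leftmost $a$-position in $(x,y)$, and try to witness $z$ by rebinding one of the available variables.

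The main obstacle is the two-variable restriction: rebinding $x$ to play the role of $z$ forces the recursive subformula $\psi_k(z,y)$ to be written with $z$ replaced by the new $x$, but the side constraint ``old $x < $ new $x$'' becomes inexpressible because the outer $x$ is shadowed by the inner quantifier. Simply dropping the constraint changes the meaning: the translated formula would count $a$'s over $(-\infty,y)$ rather than over $(x,y)$. Any direct fix must compensate for the lost endpoint by encoding the original $x$ indirectly, e.g.\ through nested between-predicate bookkeeping.

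My preferred route to circumvent this is to go through the temporal logic counterparts introduced in the same section. I would separately establish $\fotwobet \equiv \bthtl$ and $FO^2[<,Th] \equiv \thtl$ by the standard Kamp-style translations between two-variable first-order fragments and their temporal analogues, and then show $\thtl \equiv \bthtl$ by an inductive expansion of the threshold modality into nested applications of the between modality, roughly $\mathsf{Th}_k^a\,\phi \equiv \mathsf{Bet}_a\,(\mathsf{Th}_{k-1}^a\,\phi)$ with careful handling of whether the witnessed $a$ coincides with a boundary position. Because temporal operators act relative to a single current position rather than to a pair of endpoints, the variable-reuse obstruction disappears. Composing the four translations then yields $\fotwobet \equiv FO^2[<,Th]$ as language classes; I expect the hard step to be precisely the $\thtl$-to-$\bthtl$ inductive expansion, where the quantifier complexity of the formulas inevitably blows up (consistent with the succinctness gap that will later be reflected in the complexity of satisfiability).
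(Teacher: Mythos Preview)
Your diagnosis of the obstacle is exactly right---the paper itself remarks that $(a,k)(x,y)$ for $k>1$ is not expressible as a single \fotwobet\ formula, so no atom-by-atom translation can work---and your detour through temporal logic is a workable alternative. It is, however, \emph{not} what the paper does. The paper stays entirely in first order and argues model-theoretically: it sets up a parametrized Ehrenfeucht--Fra\"iss\'e game indexed by a threshold function $\theta:A\to\mathbb{N}$ and shows, by converting Player~2's strategy in the $2k$-round $\theta$-game into a strategy in the $k$-round $\theta'$-game, that $\equiv^{\theta}_{2k}$ refines $\equiv^{\theta'}_{k}$ whenever $\theta'$ exceeds $\theta$ by one on a single letter. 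Iterating, every $\equiv^{\theta}_{k}$-class is already a union of $\equiv_{k\cdot 2^r}$-classes for the plain \fotwobet\ game, with $r=\sum_a(\theta(a)-1)$; this yields the theorem together with an explicit exponential bound on the quantifier-depth blowup. Crucially, this argument is logically \emph{prior} to the temporal equivalences: in the paper Theorem~\ref{thm.tlequivalence} is derived from Theorem~\ref{thm.invequalsthr}, not conversely.

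Your plan inverts that dependency, so be careful not to be circular: you must set up the first-order/temporal correspondences from scratch rather than quote the paper's proof of Theorem~\ref{thm.tlequivalence} (whose step $\btlth\to\fotwobet$ already invokes Theorem~\ref{thm.invequalsthr}). Two smaller points. First, a notational slip: in the paper's macros both $\thtl$ and $\bthtl$ are \emph{threshold} logics (atomic versus Boolean constraints); the between/invariance temporal logics you have in mind are $\utlinv$ and $\btlinv$. Second, your recursion $\mathsf{Th}_k^a\phi \equiv \mathsf{Bet}_a(\mathsf{Th}_{k-1}^a\phi)$ handles only a single lower-bound constraint; to reduce all of $\btlth$ to $\btlinv$ you must also treat upper bounds $\#B<c$ and conjunctions of constraints on different letters, which forces an enumeration of interleavings of the witnessing positions. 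The paper's game argument buys freedom from exactly this bookkeeping; your syntactic route buys an explicit formula-level translation.
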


There is a bit more to this than meets the eye: 
The predicates $(a,k)(x,y)$ for $k>1$ are not themselves expressible by single formulas of \fotwobet, and therefore the proof of Theorem~\ref{thm.invequalsthr} is not completely straightforward.

\subsection{Temporal Logic}

We denote by $\utl$ temporal logic with two operators $\fut$ and $\past.$  Atomic formulas are the letters $a\in A.$  Formulas are built from atomic formulas by applying the boolean operations $\wedge,\vee,$ and $\neg,$ and the modal operators $\phi\mapsto \fut\phi,$ $\phi\mapsto\past\phi.$  

We interpret these formulas in {\it marked words} $(w,i),$ where $w\in A^*$ and $1\leq i\leq |w|.$  Thus $(w,i)\models a$ if $w(i)=a,$ where $w(i)$ denotes the $i^{th}$ letter of $w.$ Boolean operations have the usual meaning. We define $(w,i)\models \fut\phi$ if there is some $j>i$ such that $(w,j)\models\phi,$ and $(w,i)\models\past\phi$ if there is some $j<i$ with $(w,j)\models\phi.$

We can also interpret a formula in ordinary, that is, unmarked words, by defining $w\models\phi$ to mean $(w,1)\models\phi.$  Thus temporal formulas, like first-order sentences, define languages in $A^*.$  The temporal logic $\utl$ is known to define exactly the languages definable in $FO^2[<]$ \cite{EVW,TW}.

We now define new temporal logics by modifying the modal operators $\fut$ and $\past$ with {\it threshold constraints}---these are versions of the between predicates $a(x,y)$ and $(a,k)(x,y)$ that we introduced earlier.   Let $B\subseteq A.$  A threshold constraint is an expression of the form $\#B\sim c,$  where $c\geq 0,$ and $\sim$ is one of the symbols $\{<,\leq,>,\geq,=\}.$ Let  $w\in A^*$ and $1\leq i<j\leq |w|.$ We say that $(w,i,j)$ satisfies the threshold constraint $\#B\sim c$ if
$$|\{k: i<k<j\text{ and }w(k)\in B\}|\sim c.$$
We can combine threshold constraints with boolean operations $\wedge,\vee,\neg.$  We define satisfaction of a boolean combination of threshold constraints in the obvious way--that is, $w(i,j)$ satisfies $g_1\vee g_2$ if and only if $(w,i,j)$ satisfies  $g_1$ or $g_2,$ and likewise for the other boolean operations.

If $g$ is a boolean combination of threshold constraints, then our new operators $\gfut{g}$ and $\gpast{g}$ are defined as follows: $(w,i)\models \gfut{g}\phi$ if there exists $j>i$ such that $(w,i,j)$ satisfies $g$ and $(w,j)\models\phi$, $(w,i)\models\gpast{g}\phi$ if and only if there exists $j<i$ such that $(w,j,i)$ satisfies $g$ and $(w,j)\models\phi$.

\smallskip
\noindent{\bf Examples.}
\smallskip

\noindent We can express $\fut\phi$ with threshold constraints as $\gfut{\#\emptyset=0}\phi.$

\smallskip

\noindent We use $\nextt$ to denote the `next' operator:  $(w,i)\models \nextt\phi$ if and only if $(w,i+1)\models\phi.$ We can express this with threshold constraints by $\gfut{\#A=0}\phi.$

\smallskip

\noindent We can define the language $(ab)^+$ over the alphabet $\{a,b\}$ as the conjunction of several subformulas: $a\wedge \nextt b$ says that the first letter is $a$ and the second $b.$  $\neg \fut (a\wedge \nextt  a)$ says that no  occurrence of $a$ after the first letter is immediately followed by another $a,$ and similarly we can say that no occurrence of $b$ is followed immediately by another $b.$ The formula $\fut(b\wedge\neg\nextt (a\vee b))$ says that the last letter is $b.$

\smallskip

\noindent 
It is useful to have boolean combinations of threshold constraints.
The language $STAIR_k$ given in Section \ref{sec:twotwo} can be defined by 
$\fut (\gfut{\#a=k \land \#b=0} ~true)$.

\smallskip

We denote by $\btlth$ temporal logic with these modified operators $\gfut{g}$ and $\gpast{g},$ where $g$ is a boolean combination of threshold constraints.  We also define several fragments of  $\btlth$:  In $\utlth$ we restrict the constraints $g$ to be atomic threshold constraints, rather than boolean combinations.  In $\utlinv$ we restrict to constraints of the form $\#B=0$--we call these {\it invariant constraints} and in $\btlinv$ to boolean combinations of such constraints.

\begin{theorem}~\label{thm.tlequivalence} The logics $\utlth,$ $\btlth,$ $\utlinv,$ $\btlinv,$ $FO^2[<,Th],$ and \fotwobet\/ all define the same family of languages.
\end{theorem}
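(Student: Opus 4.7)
The plan is to close the cycle of inclusions
\[
\utlinv \subseteq \btlinv \subseteq \btlth, \qquad \utlinv \subseteq \utlth \subseteq \btlth, \qquad \btlth \subseteq FO^2[<,Th] = \fotwobet \subseteq \utlinv,
\]
where the middle equality is Theorem~\ref{thm.invequalsthr}. The four syntactic inclusions among the temporal fragments are free: an atomic constraint is a degenerate boolean combination, and every invariant constraint $\#B=0$ is a special case of a threshold constraint. The inclusion $\btlth \subseteq FO^2[<,Th]$ is the standard inductive translation of modal formulas into two-variable first-order logic, with the pivot position toggling between $x$ and $y$ as one descends through modalities, and each threshold constraint $\#B\sim c$ on an interval being expanded into a finite boolean combination of atomic $(a,k)(x,y)$ predicates.

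The substantive direction is $\fotwobet \subseteq \utlinv$, which I would prove by adapting the argument of Etessami, Vardi, and Wilke for $FO^2[<]=\utl$ to accommodate the new binary atoms. The induction is on the quantifier depth of a $\fotwobet$ formula $\phi(x)$ with at most one free variable, the inductive claim being that $\phi(x)$ translates to a $\utlinv$ formula interpreted at $x$. In the inductive step, one rewrites $\phi(x)$ as a boolean combination of conjuncts of the form
\[
\exists y\bigl(x<y \,\land\, \psi(y) \,\land\, \bigwedge_{a\in B}\neg a(x,y)\bigr)
\]
and the symmetric variant with $y<x$; such a conjunct translates directly to $\gfut{\#B=0}\tilde\psi$ (respectively $\gpast{\#B=0}\tilde\psi$), where $\tilde\psi$ is the recursive translation of $\psi$. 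Positive between atoms $a(x,y)$ appearing in the matrix are eliminated by the ``peeling'' construction already used in the proof of Theorem~\ref{thm.invequalsthr}: one witnesses the existence of $a$ in the open interval $(x,y)$ by an additional existential that re-uses the second variable and so unfolds into a nesting of modalities.

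The main obstacle is putting $\phi(x)$ into this separated form while staying inside $\utlinv$ rather than only $\btlinv$. Combining several negated and positive between atoms over a single existential naively produces boolean combinations of invariant constraints attached to a single modality; to push the booleans outside the modalities I would use the equivalences
\[
\gfut{g_1\vee g_2}\phi \,\equiv\, \gfut{g_1}\phi \vee \gfut{g_2}\phi, \qquad \gfut{\#(B_1\cup B_2)=0}\phi \,\equiv\, \gfut{\#B_1=0 \,\land\, \#B_2=0}\phi,
\]
and would handle a negated invariant constraint $\neg(\#B=0)$ by splitting the future interval at a witnessing $B$-letter and nesting two atomic invariant modalities, as in $\gfut{\#\emptyset=0}\bigl(B \,\land\, \gfut{\#\emptyset=0}\phi\bigr)$ for the simplest case. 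Making this structural decomposition precise by putting the constraint in DNF, distributing disjunction outside the modality, and handling each conjunctive clause by an induction on the number of negated atoms, is the delicate technical core; once it is in place, the syntactic inclusions close the cycle and all six logics coincide as language classes.
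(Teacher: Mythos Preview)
Your proposal is correct and follows the same overall architecture as the paper: establish the trivial syntactic inclusions among the four temporal logics, translate $\btlth$ into $FO^2[<,Th]$, invoke Theorem~\ref{thm.invequalsthr}, and then close the cycle by showing $\fotwobet\subseteq\utlinv$. The DNF-plus-peeling idea you sketch for eliminating positive between atoms (equivalently, constraints $\#\{a\}>0$) is exactly the mechanism the paper uses.

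The one genuine difference is how the step from $\fotwobet$ to $\utlinv$ is organised. You propose a direct EVW-style syntactic translation from $\fotwobet$ formulas to $\utlinv$ formulas, normalising the matrix of each quantifier and peeling off positive atoms in situ. The paper instead factors this step through $\btlinv$: it observes via a one-line game argument that $\equiv_k$ for $\fotwobet$ coincides with modal-depth-$k$ equivalence for $\btlinv$ (so $\fotwobet=\btlinv$ as language classes), and only then performs the DNF-plus-peeling reduction, now purely at the temporal level, to show $\btlinv\subseteq\utlinv$. The paper's route avoids having to spell out the two-variable normal form and the variable-swapping bookkeeping, at the cost of appealing to the game characterisation; your route is more explicitly constructive. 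Both are sound.

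One small cross-reference issue: you attribute the peeling construction to the proof of Theorem~\ref{thm.invequalsthr}, but in the paper that proof is entirely a game argument (refining $\equiv^{\theta'}_k$ by $\equiv^{\theta}_{2k}$). The peeling construction you have in mind is introduced in the paper precisely here, in the $\btlinv\to\utlinv$ step of the present theorem.
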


\section{Complexity of Satisfiability}\label{KamalParitosh}

Given a formula in one of these logics, what is the computational complexity of determining whether it has a model, that is, whether the language it defines is empty or not?  This is the {\it satisfiability problem}  for the logic.  To determine this, we require some way to measure the size of the input formula.  For formulas containing threshold constraints, we code the threshold value in binary, so that mention of a threshold constant $c$ contributes $\lceil\log_2 c\rceil$ to the size of the formula. Mention of a subalphabet $B$ contributes $|B|$ to the size of the formula.

In the verification literature (\cite{EVW} is relevant for
this paper) the syntax allows 
a finite set of propositional letters $PV$
which may or may not simultaneously  hold at a position of a word. 
This allows us to compactly talk about large alphabets.
One can think of the alphabet $A$ as the set of valuations $2^{PV}$ 
to get finite word models over $A$.
Thus the alphabet $A$ is given a boolean algebra structure and subsets of $A$
are specified using propositions over $PV$. 
Our results below hold for bounded and unbounded alphabets, 
which may be explicitly specified or symbolically specified by propositions.

\begin{theorem}\label{thm.utlsat}
Satisfiability of the temporal logics $\invtl$ (with invariant constraints) and
$\bthtl$ (with threshold constraints) 
is complete for \pspace\/ and \expspace, respectively.
\end{theorem}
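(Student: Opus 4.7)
The plan is to prove matching $\pspace$ and $\expspace$ bounds for $\invtl$ and $\bthtl$ respectively. For the upper bounds I would build, from the input formula, an alternating one-way word automaton of appropriate size and then invoke the standard non-emptiness procedure. In the case of $\invtl$, the modality $\gfut{\#B=0}\phi$ is just a constrained Until whose obligation is killed as soon as a letter of $B$ is read, so each subformula of the input can be handled by a single automaton state; the resulting automaton has polynomially many states and its non-emptiness lies in $\pspace$. For $\bthtl$ the added difficulty is that threshold constraints $\#B \sim c$ use constants $c$ coded in binary, so each atomic threshold subformula must carry a counter bounded by $c$, recorded in $O(\log c)$ bits. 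Storing the joint counter valuation for all atomic threshold subformulas uses polynomial space per state but induces an exponentially large configuration space, so the on-the-fly non-emptiness check runs in $\expspace$.

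For the lower bounds, $\pspace$-hardness of $\invtl$ is immediate from the known $\pspace$-hardness of $\ltlunsuc$, because $\nextt\phi \equiv \gfut{\#A=0}\phi$ and $\prev\phi \equiv \gpast{\#A=0}\phi$ lie in $\invtl$, so any $\ltlunsuc$ sentence translates linearly into an equi-satisfiable $\invtl$ sentence. For $\bthtl$ the $\expspace$ lower bound comes from a reduction from the $2^n$-corridor tiling problem. Using a separator letter $\$$ and the fact that $\gfut{\#A = 2^n}$ is a fixed jump of exactly $2^n+1$ positions, one can enforce that each row has width $2^n$ (the count of non-$\$$ letters between consecutive $\$$'s equals $2^n$) and that the tile in column $i$ of row $r$ is compatible with the tile in column $i$ of row $r+1$ (by jumping $2^n+1$ positions across the separator). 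Horizontal compatibility and initial/final conditions are local constraints on neighbouring positions, already expressible in $\utl$ with next operators. The resulting $\bthtl$ sentence has polynomial size in $n$ and in the tile system, witnessing $\expspace$-hardness.

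The chief obstacle will be the $\bthtl$ upper bound: a naive construction that maintains one independent counter for every atomic threshold inside every boolean combination produces configurations well beyond exponential size. One must exploit the observation that at each position only those threshold subformulas whose evaluation is currently pending need their counters kept, and that the boolean combination $g$ inside a single $\gfut{g}$ or $\gpast{g}$ operator can be evaluated from a common record of letter counts over the current interval. Carefully sharing counters across related subformulas and releasing counters once the corresponding obligation is discharged is what keeps the reachable configuration space within exponential size, and thereby the satisfiability check within $\expspace$.
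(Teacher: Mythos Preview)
Your proposal is correct, but the route differs from the paper's in the upper bounds and, to a lesser extent, in the \expspace\ lower bound.

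For the \pspace\ upper bound on $\invtl$ (and $\btlinv$), the paper does not build an automaton at all: it simply observes that $\gfut{\#B=0}\phi$ is a restricted until, so the whole logic translates in polynomial time into ordinary $\ltlbin$, and then invokes the Sistla--Clarke \pspace\ bound for \ltl.  Your ``constrained Until'' remark is exactly this observation, so the difference is only that you push through to an alternating automaton yourself rather than stopping at the translation.  For the \expspace\ upper bound on $\bthtl$, the paper again avoids a direct automaton construction: it uses the exponential translation of threshold constraints into invariant constraints (implicit in the proof of Theorem~\ref{thm.invequalsthr}) and then the \pspace\ bound for $\btlinv$; alternatively it cites a polynomial translation into the counting logic \cltlbin\ of Laroussinie, Meyer and Petonnet, whose \expspace\ bound is known.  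Thus the paper's argument is a two-line reduction, whereas yours is a self-contained automata construction.  Your route is more explicit and does not depend on outside complexity results for \cltlbin, but the counter-management worry you raise in the last paragraph is largely unnecessary: with alternation, each branch of the run carries the counters for a \emph{single} pending $\gfut{g}$ obligation, and since $g$ has at most $|\phi|$ atomic thresholds each bounded by a constant written in $|\phi|$ bits, a state fits in $O(|\phi|^2)$ bits without any sharing tricks.

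For the \expspace\ lower bound, the paper does not spell out a tiling reduction.  It notes that the threshold $\#A=2^n$ directly encodes the succinct operator $\nextt^{2^n}$, and then cites the Alur--Henzinger result that \ltl\ with such succinctly written iterated next is \expspace-hard.  Your corridor-tiling reduction is essentially what underlies that citation, so the two arguments are morally the same; yours is just unpacked.  The \pspace\ lower bound via $\ltlunsuc\subseteq\invtl$ is identical in both.
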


\begin{theorem}
\label{thm.fo2sat}
Satisfiability of the two-variable logics
$\fotwobet$ and $\fotwothr$ is \expspace-complete.
\end{theorem}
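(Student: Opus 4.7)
My plan is to handle the two logics together: $\fotwobet$ is a syntactic fragment of $\fotwothr$ (by setting $k = 1$), so proving the upper bound for $\fotwothr$ and the lower bound for $\fotwobet$ suffices to cover both statements of the theorem. It is important to note that the language-level equivalence provided by Theorem~\ref{thm.invequalsthr} is not by itself enough, because translating $(a,k)(x,y)$ into $\fotwobet$ is inherently exponential in $k$, so the two complexity claims must be argued separately.

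For EXPSPACE-hardness I would reduce directly from acceptance by a Turing machine $M$ that uses at most $2^n$ cells of tape on an input of length $n$, a canonical EXPSPACE-complete problem. The reduction exploits the $r$-bit counter construction of Section~\ref{sec:examples} with $r = n$: a word encodes an accepting run of $M$ as a sequence of marker-delimited configurations, each configuration being $2^n$ blocks of $n + O(1)$ bits holding (cell index, tape symbol, and optional head/state information). Using the polynomial-size formulas $EQ$, $INC_1$, and $INC_c$ already exhibited there, the target $\fotwobet$ sentence asserts (i) the first configuration encodes the initial tape, (ii) some block carries an accepting state, and (iii) any two consecutive configurations are related by the transition function, cell by cell, via the head's local rule applied near the head and $EQ$ elsewhere. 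The sentence has size polynomial in $n$, so acceptance reduces in polynomial time to satisfiability, yielding EXPSPACE-hardness.

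For the EXPSPACE upper bound I would argue directly via small-model / automata methods rather than by translation to $\bthtl$: the standard $FO^2$-to-temporal translation incurs an exponential blowup (which explains why $FO^2[<]$ is NEXPTIME while $\ltlun$ is only NP), so combining it with Theorem~\ref{thm.utlsat} would give only a doubly-exponential-space bound. Given a $\fotwothr$ sentence of size $n$ with maximum threshold $T \leq 2^n$, the relevant two-variable \emph{threshold type} of a pair of positions $(i,j)$ records, for each letter $a$, the number of $a$'s strictly between $i$ and $j$ capped at $T$, together with the truth values of all one- and two-variable subformulas at $i$, $j$, and $(i,j)$. There are at most doubly exponentially many such types, and an Ehrenfeucht--Fraïssé / type-collapsing induction on formula structure shows that merging two positions of equal threshold type preserves the value of every subformula. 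Hence a satisfiable sentence admits a model of at most doubly-exponential length; building a nondeterministic automaton whose states are those threshold types and whose transitions update the capped counts gives an NFA of doubly-exponential size whose emptiness is decidable in space logarithmic in its state count, i.e.\ single exponential space.

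The main obstacle is the type analysis underlying the upper bound: one must design the threshold-type invariant carefully enough that collapsing type-equivalent positions truly preserves every subformula's value and, simultaneously, that the state space stays doubly exponential rather than triply exponential. The delicate point is tracking counts of multiple letters in parallel while keeping each capped at the single relevant threshold $T$, so that the state record has size $O(n^2)$ bits rather than $O(n \cdot T)$; once this is in place, the automaton construction and emptiness check are standard. On the hardness side the only subtlety is a clean cell-by-cell encoding of $M$'s transition relation as a polynomial-size $\fotwobet$ sentence, which is an exercise in bookkeeping with the $EQ$ and $INC_c$ formulas already available.
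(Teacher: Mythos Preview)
Your lower-bound plan is close in spirit to the paper's (the paper reduces from exponential corridor tiling rather than TM acceptance, but both rely on the $r$-bit counter of Section~\ref{sec:examples}), yet it has a real gap. To compare cell $i$ of one configuration with cell $i$ of the \emph{next} configuration you need to express, with two variables and only the between predicates, that $x$ and $y$ lie in \emph{adjacent} configurations. The formulas $EQ$ and $INC_c$ match column indices, but a single configuration delimiter together with $\neg mark(x,y)$ only tells you that $x,y$ are in the \emph{same} configuration; $mark(x,y)$ says ``at least one delimiter in between'', not ``exactly one''. The paper solves this by colouring configurations cyclically with three colours and expressing adjacency as ``some colour is absent between $x$ and $y$''; without this (or an equivalent device) your clause (iii) is not expressible in \fotwobet. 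So the hardness argument is not merely bookkeeping with the formulas already available.

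On the upper bound your route differs substantially from the paper's, and your reason for rejecting the translation route is only half right. You correctly observe that an exponential translation from \fotwothr\ into \bthtl\ composed with the \expspace\ bound for \bthtl\ would overshoot. The paper avoids this not by abandoning translation but by inserting an intermediate step: Lemma~\ref{lem.reduction} gives a \emph{polynomial} satisfiability-preserving reduction from \fotwothr\ to \fotwobet, obtained by introducing, for each threshold constraint, a global modulo-$2^r$ counter carried by fresh propositions (again using the three-colour trick to bound overflows). One then translates \fotwobet\ exponentially into \btlinv\ via an order-type argument as in~\cite{EVW}, and invokes the \pspace\ bound for \btlinv\ from Theorem~\ref{thm.utlsat}; the composition is \expspace. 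Your direct automata/type-collapsing approach is plausible and could likely be completed, but as written it conflates $2$-types of pairs with automaton states and leaves the crucial invariant (that merging type-equivalent positions preserves all subformula values while keeping the state record polynomial in bits) unproved. The paper's detour through Lemma~\ref{lem.reduction} sidesteps exactly this delicate bookkeeping.
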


\section{Algebraic Characterization}\label{AndreasHoward}

\subsection{Background on finite monoids and varieties}

For further background on the basic algebraic notions in this section, see Pin~\cite{Pin}. 

A {\it monoid} is a set together with an associative multiplication (that is, it is a {\it semigroup}) and a multiplicative identity 1.

All of the languages defined by sentences of $FO[<]$ are regular languages. Our characterization of languages in \fotwobet\/ is based on properties of the {\it syntactic monoid} $M(L)$ of a regular language $L.$  This is the transition monoid of the minimal deterministic automaton recognizing $L,$ and therefore a finite monoid. Equivalently, $M(L)$ is  the smallest monoid $M$ that {\it recognizes} $L$ in the following sense:  There is a homomorphism $\phi:A^*\to M$ and a subset $X\subseteq M$ such that $L=\phi^{-1}(X).$

Let $M$ be a finite monoid. An {\it idempotent} $e\in M$ is an element satisfying $e^2=e.$  If $m\in M,$ then there is some $k\geq 1$ such that $m^k$ is idempotent.  This idempotent power of $m$ is unique, and we denote it by $m^{\omega}.$ 

A finite monoid is {\it aperiodic} if it contains no nontrivial groups, equivalently, if it satisfies the identity $x\cdot x^{\omega}=x^{\omega}$ for all $x\in M.$  We denote the class of aperiodic finite monoids by ${\bf Ap}.$  ${\bf Ap}$ is a {\it variety} of finite monoids: this means that it is closed under finite direct products, submonoids, and quotients.

A well-known theorem, an amalgam of results of McNaughton and Papert~\cite{MNP} and of Sch\"utzenberger~\cite{Sch1}, states that $L\subseteq A^*$ is definable in $FO[<]$ if and only if $M(L)\in{\bf Ap}.$  This situation is typical:  Under very general conditions, the languages definable in fragments of $FO[<]$ can be characterized as those whose syntactic monoids belong to a particular variety ${\bf V}$ of finite monoids.(See Straubing~\cite{Str-latin}.)

If $M$ is a finite monoid and $m_1,m_2\in M,$ we write $m_1\leq_{\cal J} m_2$ if $m_1=sm_2t$ for some $s,t\in M.$  This is a preorder, the so-called ${\cal J}$-ordering on $M.$ If $e\in M$ is idempotent, then we denote by $M_e$ the submonoid of $M$ generated by elements $m$ such that $e\leq_{\cal J} m.$ Observe that $eM_ee$ is a subsemigroup of $M_e,$ in fact a monoid whose identity element is $e.$

If ${\bf V}$ is a variety of finite monoids, then we can form a new variety ${\bf M}_e{\bf V}$ as follows:
$${\bf M}_e{\bf V}=\{M:eM_ee\in {\bf V}\text{ for all $e^2=e\in M$}\}.$$

\begin{proposition}~\label{prop.emee}
${\bf M}_e{\bf V}$ is a variety of finite monoids.
\end{proposition}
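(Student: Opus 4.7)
The plan is to verify in turn the three closure axioms for a variety of finite monoids: closure under submonoids, quotients, and finite direct products. The first and third are essentially transparent; the real content lies in the quotient case.

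For submonoids, let $N$ be a submonoid of $M \in \mathbf{M}_e\mathbf{V}$, and let $f$ be an idempotent of $N$. Any decomposition $f = snt$ witnessing $f \leq_{\mathcal{J}} n$ in $N$ also witnesses the same relation in $M$, so the generators of $N_f$ all lie in $M_f$ and $N_f \subseteq M_f$. Hence $fN_ff$ is a submonoid of $fM_ff \in \mathbf{V}$, and lies in $\mathbf{V}$ by closure under submonoids. For direct products, let $(e_1, e_2)$ be an idempotent of $M_1 \times M_2$ with $M_1, M_2 \in \mathbf{M}_e\mathbf{V}$; each $e_i$ is then idempotent in $M_i$, and reading $(e_1, e_2) \leq_{\mathcal{J}} (m_1, m_2)$ componentwise forces $e_i \leq_{\mathcal{J}} m_i$ in $M_i$. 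Consequently $(M_1 \times M_2)_{(e_1, e_2)} \subseteq (M_1)_{e_1} \times (M_2)_{e_2}$, and $(e_1, e_2)(M_1 \times M_2)_{(e_1, e_2)}(e_1, e_2)$ embeds as a submonoid of $e_1 (M_1)_{e_1} e_1 \times e_2 (M_2)_{e_2} e_2 \in \mathbf{V}$, which belongs to $\mathbf{V}$ by closure under products and submonoids.

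For quotients, suppose $\phi : M \to N$ is a surjective morphism with $M \in \mathbf{M}_e\mathbf{V}$, and let $f$ be an idempotent of $N$. The aim is to produce an idempotent preimage $e \in M$ of $f$ with $\phi(M_e) = N_f$; then $\phi(eM_ee) = fN_ff$ exhibits $fN_ff$ as a quotient of $eM_ee \in \mathbf{V}$. One inclusion, $\phi(M_e) \subseteq N_f$, is automatic since morphisms preserve the $\mathcal{J}$-order. The hard part is the reverse inclusion: an \emph{arbitrary} idempotent preimage of $f$ may fail to be $\mathcal{J}$-below preimages of every generator of $N_f$, so $e$ must be engineered globally from all the relevant witnesses simultaneously.

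For this I would use the following $\omega$-power trick. Enumerate the finitely many $n \in N$ satisfying $f \leq_{\mathcal{J}} n$ as $n_1, \ldots, n_k$, fix in $N$ witnessing decompositions $f = s_i n_i t_i$, and pick arbitrary preimages $\tilde{s}_i, \tilde{n}_i, \tilde{t}_i \in M$. Set $z = (\tilde{s}_1 \tilde{n}_1 \tilde{t}_1) \cdots (\tilde{s}_k \tilde{n}_k \tilde{t}_k)$ and $e = z^\omega$. Then $\phi(z) = f^k = f$, so $e$ is an idempotent preimage of $f$. For each $i$ the product $z$ visibly factors as $z = a \tilde{n}_i b$ with $a,b \in M$; since $e = z^\omega$ is a positive power of $z$, it inherits a two-sided factorization through $\tilde{n}_i$, giving $e \leq_{\mathcal{J}} \tilde{n}_i$ and hence $\tilde{n}_i \in M_e$. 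Because $\phi(\tilde{n}_i) = n_i$, the image $\phi(M_e)$ contains every generator of $N_f$, which gives $N_f \subseteq \phi(M_e)$ and finishes the argument. The main obstacle is exactly this construction of $e$: the generating set of $M_e$ depends sensitively on the choice of idempotent preimage, so the $\omega$-trick on a product of \emph{all} witnesses is what makes the quotient closure work.
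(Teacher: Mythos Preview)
The paper states this proposition without proof; it is treated as a background algebraic fact and invoked only to conclude that ${\bf DA}$ is a variety. Your argument is correct, and the three closure checks are handled properly. In particular, your treatment of the quotient case via the $\omega$-power construction is the standard and correct idea: building $e=z^{\omega}$ from a product $z$ that threads through chosen preimages of \emph{all} generators of $N_f$ is exactly what is needed to force $e\leq_{\cal J}\tilde n_i$ simultaneously, and hence $\phi(M_e)=N_f$. Since the paper offers no proof to compare against, there is nothing to contrast; your write-up would serve as a complete proof of the proposition.
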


Let ${\bf I}$ denote the variety consisting of the trivial one-element monoid alone. We define the class 
$${\bf DA}={\bf M}_e{\bf I}.$$
That is, ${\bf DA}$ consists of those finite monoids $M$ for which $eM_ee=e$ for all idempotents $e\in M.$ By Proposition~\ref{prop.emee}, ${\bf DA}$ is a variety of finite monoids. The variety ${\bf DA}$ was introduced by Sch\"utzenberger~\cite{Sch2} and it figures importantly in work on two-variable logic.   Th\'erien and Wilke showed that a language $L$ is definable in $FO^2[<]$ if and only if $M(L)\in{\bf DA}$~\cite{TW}.

\smallskip
\noindent{\bf Example.}  Consider the language $L\subseteq \{a,b\}^*$ consisting of all words whose first and last letters are the same. The syntactic monoid of $L$ contains five elements
$$M(L)=\{1,(a,a),(a,b),(b,a),(b,b)\},$$
with multiplication given by $(c,d)(c',d')=(c,d'),$ for all $c,d,c',d'\in\{a,b\}.$  Observe that every element of $M(L)$ is idempotent.  For every $e\neq 1,$ $eM(L)e=e,$ and if $e=1,$ then $M(L)_e=1.$  Thus, $M(L)\in{\bf DA}.$  The logical characterization then tells us that $L$ is defined by a sentence of $FO^2[<].$  Indeed, $L$ is defined by
$$\exists x(\forall y(x\leq y)\wedge\exists y(\forall x(x\leq y)\wedge(a(x)\leftrightarrow a(y)))).$$

\noindent{\bf Example.} Consider the language $(ab)^*$.  We claimed earlier that it is not definable in $FO^2[<].$ We can prove this using the algebraic characterization of the logic.  The elements of the syntactic  monoid $M$ are 
$$1,a,b,ab,ba,0.$$
The multiplication is determined by the rules $aba=a,$ $bab=b,$ and $a^2=b^2=0.$ Then $ab$ and $ba$ are idempotents, and $M_{ab}=M_{ba} = M.$ Thus $ab\cdot M_{ab}\cdot ab=\{ab,0\},$ which shows that $M\notin {\bf DA},$ and thus $(ab)^*$ is not definable in $FO^2[<].$.

\smallskip

\noindent{\bf Example.}  Now consider the language given by the regular expression $(a+b)^*bab^+ab(a+b)^*.$  We saw earlier that it is definable in \fotwobet, and claimed that it could not be defined in \fotwosuc.  Th\'erien and Wilke~\cite{TW} also give an algebraic characterization of \fotwosuc :  Let $S$ be subsemigroup of the syntactic monoid of $L$ generated by nonempty words (the {\it syntactic semigroup} of $L\cap A^+$).  $L$ is definable in \fotwosuc\/ if and only if for each idempotent $e\in S,$ the monoid $N=eSe$ is in {\bf DA}.  For the language under discussion, let us denote the image of a word $w$ in the syntactic monoid by $\overline{w}.$  Then $e=\overline{b}$ is idempotent, and $f=\overline{baab}$ is an idempotent in $N=eSe.$ Let $s=\overline{bab}.$  Then $s\in eSe,$ and $fsf=f,$ so $s\in M_f.$  
We now have
$fsf=f\neq fssf,$ since $fssf=\overline{babab}$ is the zero of $N.$ Thus $fM_ff$ contains more than one element, so $eSe\notin {\bf DA}.$ Consequently this language, while definable in \fotwobet, cannot be defined in \fotwosuc.

\subsection{The main result}

\begin{theorem}\label{thm.maintheorem}
Let $L\subseteq A^*.$
If $L$ is definable in \fotwobet\/ then $M(L)\in {\bf M}_e{\bf DA}.$ Further, if $|A|=2,$ and $M(L)\in {\bf M}_e{\bf DA},$ then $L$ is definable in \fotwobet\/.
\end{theorem}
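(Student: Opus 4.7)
I would use the equivalence with $\btlinv$ from Theorem~\ref{thm.tlequivalence}, since the only non-Boolean between-invariant in that logic is the set of letters occurring between two pebbled positions, making an Ehrenfeucht--Fra\"iss\'e analysis cleaner. The goal is to show that for every idempotent $e \in M(L)$, the monoid $eM_ee$ satisfies the $\mathbf{DA}$ identity $xyx = x$ for every idempotent $x$ and every $y$ in $eM_ee$. The plan is to pick preimages $u$ of $x$ and $v$ of $y$ and prove $\phi(uvu) = \phi(u)$ by a two-pebble EF game adapted to \fotwobet, whose winning condition requires matching pebble letters, order, and the set of letters strictly between pebbles. The key algebraic preparation is to find a single preimage $W$ of $e$ whose alphabet contains the letters of chosen preimages of all generators of $M_e$ used to represent $x$ and $y$: for each generator $m$ with $e \leq_{\cal J} m$ we have $e = smt$, so a preimage of $m$ can be padded to a preimage of $e$, and concatenating such padded preimages (using $e^k = e$) gives $W$. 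Taking $u = (WVW)^N$ for $N$ larger than the quantifier rank, both $u^2$ and $uvu$ share a common alphabet and differ in only one middle block; the duplicator's block-matching strategy across the $N$ identical $WVW$ blocks keeps all between-letter sets in sync.

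\textbf{Sufficiency for $|A|=2$.} This direction I would prove by induction on $|M(L)|$ using the $\mathcal{J}$-class structure. The plan is to express $L$ as a Boolean combination of languages of the form $L_1 \cdot K \cdot L_2$, where $K$ consists of words mapped into $eM_ee$ for a chosen idempotent $e$ and whose alphabet is a fixed $B \subseteq A$. The hypothesis $eM_ee \in \mathbf{DA}$ combined with Th\'erien--Wilke makes $K$ definable in $FO^2[<]$, while the alphabetic constraint on the middle factor is expressible by the between-formula $\bigwedge_{a \notin B} \neg a(x,y) \wedge \bigwedge_{a \in B} a(x,y)$. The outer factors correspond to recognition in a smaller monoid obtained by quotienting out the chosen $\mathcal{J}$-class, so induction applies. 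The two-letter hypothesis is used to keep the possible alphabets $B$ to only four values, bounding the combinatorics of the decomposition.

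\textbf{Main obstacle.} The sufficiency direction is the hard part: translating the abstract condition $eM_ee \in \mathbf{DA}$ into an explicit \fotwobet\ formula respecting the 2-variable budget requires careful synchronization of the $\mathcal{J}$-class hierarchy with the alphabetic constraints available via between-predicates. The paper's cited category-theoretic framework of Tilson is the natural tool for tracking how local data at each idempotent $e$ assembles into a global recognizer, and the 2-letter assumption is precisely what keeps this assembly tractable; the general case is stated in the paper only as a conjecture.
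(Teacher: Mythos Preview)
Your necessity sketch is close in spirit to the paper's argument: both establish a ${\bf DA}$-defining identity in $eM_ee$ via an EF game on carefully padded words whose alphabets coincide. Two remarks. First, the identity ``$xyx=x$ for every idempotent $x$ and every $y$'' is too strong (it fails already in $\{1,0\}\in{\bf DA}$); the paper instead proves $(xy)^{\omega}x(xy)^{\omega}=(xy)^{\omega}$ for all $x,y\in eM_ee$. Second, the game itself is more delicate than your one-line ``block-matching'' suggests: the paper needs an auxiliary \emph{$m$-enhanced} game (Lemma~\ref{lemma.xst}) with a nested induction on both the number of rounds $k$ and a neighbourhood parameter $m$, because a na\"ive copy strategy breaks near the seam between the two halves.

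The sufficiency sketch, however, has a genuine gap, and the paper's route is quite different from yours. Your plan to write $L$ as a Boolean combination of concatenations $L_1\cdot K\cdot L_2$, with $K$ constrained to a subalphabet $B$, runs into two problems. First, concatenation is not in general expressible in two-variable logic: marking the two cut points costs variables, and the between-formula $\bigwedge_{a\notin B}\neg a(x,y)$ only helps when $B\subsetneq A$. When $B=A=\{a,b\}$---precisely the interesting case---that formula is vacuous and you have no handle on where $K$ sits. Second, your diagnosis that the two-letter hypothesis merely ``bounds the combinatorics'' by limiting $B$ to four values misses where $|A|=2$ is actually used.

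The paper does not induct on $|M(L)|$ or on the ${\cal J}$-class structure at all. Instead it exploits a feature specific to binary words: every $w\in\{a,b\}^*$ factors uniquely into maximal runs $a^{k_r}b^{\ell_r}\cdots a^{k_1}b^{\ell_1}$. From this it defines a threshold-$T$ congruence $\cong$ on run-lengths and block-counts, with quotient $K$, and shows that the \emph{derived category} of the relational morphism $\psi\phi^{-1}:M\to K$ divides ${\cal C}(\phi)$, whose base monoids lie in $eM_ee\in{\bf DA}$; locality of ${\bf DA}$ then gives a single ${\bf DA}$-monoid $N$ covering the derived category. The translation into \fotwobet\ goes via \fotwothr: the $\cong$-class of $w$ is described by hopping left across sub-block boundaries using threshold predicates on the intervening $a$'s and $b$'s, and the ${\bf DA}$ part becomes an $FO^2[<]$ sentence over the arrow alphabet, each of whose atomic formulas unwinds to a similar threshold description of a prefix $\cong$-class. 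The two-letter hypothesis is what makes the run-based congruence $\cong$ available; it is not about counting subalphabets.
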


 We conjecture that sufficiency of the condition holds for all finite alphabets, not just those with two letters.  We can prove from rather abstract principles that there is some variety ${\bf W}$ of finite monoids that characterizes definability in \fotwobet\/ in this way.  Our results imply that ${\bf W}$ coincides with ${\bf M}_e{\bf DA}$ for monoids generated by two elements. The theorem provides an effective method for determining whether a given language over $A=\{a,b\}$ (given, say, by a finite automaton that recognizes it, or by a regular expression) is definable in \fotwobet, since we can compute the multiplication table of the syntactic monoid, and check whether it belongs to ${\bf M}_e{\bf DA}.$

\smallskip
\noindent {\bf Example.}  In our example above, where $L=(ab)^*,$ all the submonoids $eM_ee$ are either trivial, or are two-element monoids isomorphic to $\{0,1\},$ which is in {\bf DA}.  Thus $(ab)^*$ is definable in \fotwobet, as we saw earlier by construction of a defining formula.

\smallskip

The following corollary to our main theorem answers our original question of whether $FO[<]$ has strictly more expressive power than \fotwobet.
To prove it, we need only calculate the syntactic monoid of the given language, and verify that it is in {\bf Ap} but not in ${\bf M}_e{\bf DA}.$

\begin{corollary}~\label{cor.strictinclusion} The language given by the regular expression
$$(a(ab)^*b)^*$$
is definable in $FO[<]$ but not in \fotwobet.

\end{corollary}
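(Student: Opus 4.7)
The plan is to apply Theorem~\ref{thm.maintheorem}: verify that $M(L) \in {\bf Ap}$, so $L \in FO[<]$, while producing an idempotent $e \in M(L)$ with $eM_ee \notin {\bf DA}$, so $M(L) \notin {\bf M}_e{\bf DA}$ and $L$ is not definable in \fotwobet.

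First I would describe $L$ combinatorially:
$$L \;=\; \{w \in \{a,b\}^* : |w|_a = |w|_b \text{ and } 0 \leq |u|_a - |u|_b \leq 2 \text{ for every prefix } u \text{ of } w\}.$$
The inclusion $\subseteq$ is immediate from the level profile $0, 1, 2, 1, 2, \dots, 1, 0$ of a block $a(ab)^nb$. Conversely, any word satisfying the prefix bound can be sliced at the positions where $|u|_a = |u|_b$: between consecutive such positions the level sequence is forced to be $1, 2, 1, 2, \dots, 1, 0$ or $1, 0$, which is precisely a block $a(ab)^nb$. Consequently the minimal DFA for $L$ has four states $\{0, 1, 2, D\}$ (current level, plus a trap $D$ entered whenever the level would leave $[0,2]$), with $0$ accepting. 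Its transition monoid $M(L)$ is aperiodic because every word $w$ acts on $\{0,1,2\}$ as a uniform shift by $\Delta = |w|_a - |w|_b$ on those states whose $w$-trajectory stays in $[0,2]$, and as $D$ on the rest; iterating therefore stabilises when $\Delta = 0$ and otherwise collapses to the zero constant map, so no nontrivial group can arise. Hence $L \in FO[<]$.

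For the failure of ${\bf M}_e{\bf DA}$, represent each $m \in M(L)$ by the triple $(m(0), m(1), m(2))$, suppressing the absorbing $D$. One checks $[ab] = (0, 1, D)$ is idempotent; take $e = [ab]$. In $eM_ee$ consider
$$x \;=\; e \cdot [a] \cdot e \;=\; [abaab] \;=\; (1, D, D), \qquad y \;=\; e \cdot [b] \cdot e \;=\; [abbab] \;=\; (D, 0, D).$$
Both $[a]$ and $[b]$ lie in $M_e$ because $ab = \varepsilon \cdot a \cdot b = a \cdot b \cdot \varepsilon$ witnesses $e \leq_{\cal J} [a]$ and $e \leq_{\cal J} [b]$. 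Composing, $xy = (0, D, D)$, which is already idempotent, so $(xy)^{\omega} = (0, D, D)$. On the other hand,
$$(xy)^{\omega} \cdot x \cdot (xy)^{\omega} \;=\; (0,D,D) \cdot (1,D,D) \cdot (0,D,D) \;=\; (D,D,D),$$
the zero of $M(L)$. Since $(D, D, D) \neq (0, D, D)$, the defining identity $(xy)^{\omega} x (xy)^{\omega} = (xy)^{\omega}$ of ${\bf DA}$ fails inside $eM_ee$, so $eM_ee \notin {\bf DA}$. Theorem~\ref{thm.maintheorem} now delivers $L \notin \fotwobet$.

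The only conceptual step is the identification of $(a(ab)^*b)^*$ as the bounded-depth-$2$ balanced language on $\{a,b\}$. This identification exposes simultaneously the four-state DFA (from which aperiodicity is immediate), the natural idempotent $e = [ab]$ marking a return to level $0$, and the witnesses $(x, y) = ([aab], [abb])$ whose product sits at level $0$ but whose middle insertion pushes the trajectory above $2$. Once this picture is in hand, everything that remains is a handful of function compositions on three non-absorbing states, so the main obstacle is finding the right picture rather than computing within it.
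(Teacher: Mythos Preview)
Your proof is correct and follows essentially the same approach as the paper. Both arguments build the identical four-state DFA (states tracking the level $|u|_a-|u|_b\in\{0,1,2\}$ plus a dead state), establish aperiodicity, take the same idempotent $e=[ab]$, and exhibit the same witnesses: your $x=[abaab]=(1,D,D)$ is the paper's $m$, and your $(xy)^{\omega}=(0,D,D)$ is the paper's $f=[abaabbab]$, so your computation $(xy)^{\omega}x(xy)^{\omega}=0\neq(xy)^{\omega}$ is exactly the paper's $fmf=0\neq f$. The only cosmetic difference is that you invoke the identity $(xy)^{\omega}x(xy)^{\omega}=(xy)^{\omega}$ for {\bf DA} directly, while the paper uses the equivalent formulation ``$f\leq_{\cal J}m$ implies $fmf=f$''; your added combinatorial description of $L$ as the level-bounded balanced language is a nice clarification but not a different idea.
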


\subsection{Alternation depth}

We are  interested in how \fotwobet\/ sits inside $FO[<].$  One way to measure the complexity of a language in $FO[<]$ is by the smallest number of alternations of quantifiers required in a defining formula, that is the smallest $k$ such that the language is definable by a boolean combination of sentences of $\Sigma_k[<].$  We will call this the {\it alternation depth} of the language.  (This is closely related to the {\it dot-depth}, which can be defined the same way, but with slightly different base of atomic formulas.)

\begin{theorem}\label{thm.alternationdepth}

 The alternation depth of languages in \fotwobet\/ is unbounded.
 If $|A|=2,$ then the alternation depth of languages in $A^*$ definable in \fotwobet\/ is bounded above by 3.
\end{theorem}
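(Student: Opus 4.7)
The theorem has two essentially independent parts, and I would tackle them separately.

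For the unboundedness claim, my plan is to exhibit, for each $k \ge 1$, an alphabet $A_k$ and a language $L_k \subseteq A_k^*$ that lies in \fotwobet\/ but whose alternation depth exceeds $k$. The natural source of such examples is the counter-encoding construction of Section~\ref{sec:examples}: since \fotwobet\/ can reference positions whose encoded binary counter values stand in arbitrary arithmetic relations, one can embed into \fotwobet\/ the canonical witnesses of strictness of the Straubing--Th\'erien hierarchy. Concretely, I would take $L_k$ to assert something of the form ``between every pair of matching markers there is a nested pattern of matching markers of depth $k$'', encoded using counters so that the whole statement fits in poly($k$)-sized \fotwobet\/ formula. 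The alternation lower bound is then a standard Ehrenfeucht--Fra\"iss\'e argument for $\Sigma_k[<]$--games, applied to two carefully chosen words in and out of $L_k$.

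For the depth-$3$ bound over $|A|=2$, the plan is to invoke the forward direction of Theorem~\ref{thm.maintheorem}, giving $M(L) \in \mathbf{M}_e\mathbf{DA}$. The strategy is a decomposition according to the idempotent structure of the syntactic morphism $\phi \colon A^* \to M(L)$. For each idempotent $e \in M(L)$, the condition $eM(L)_ee \in \mathbf{DA}$ means that the ``between'' behaviour at idempotent-marked boundaries is $\mathbf{DA}$--definable. By the Th\'erien--Wilke theorem $\mathbf{DA} = FO^2[<]$, and $FO^2[<] \subseteq \Sigma_2 \cap \Pi_2$, so these inner conditions have alternation depth at most $2$. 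Now $L$ can be written as a boolean combination of languages of the form ``there exist positions $x_1 < x_2 < \cdots < x_r$ with prescribed idempotent images such that on each segment in between, a fixed $\mathbf{DA}$--condition holds'', and the outer existential block of boundary-positions adds exactly one quantifier alternation. This gives a $\mathcal{B}\Sigma_3[<]$ formula equivalent to $L$.

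The main obstacle is the outer ``boundary-identification'' step: \fotwobet\/ only has two variables, so one cannot directly quantify the whole tuple $(x_1, \dots, x_r)$ — instead the boundaries must be identified by an iterated use of the between predicate, and one must check that this iteration stays within $\Sigma_3$ of the full first-order signature (it does, because each new boundary is definable from the previous ones using only existential quantifiers over $<$). The restriction $|A|=2$ enters in two crucial places: it bounds the number of distinct idempotent types we must enumerate over, and it makes the $\mathbf{DA}$--definable inner segments expressible directly (by the main theorem's sufficiency direction, which also requires $|A|=2$). For the unbounded part, the analogous technical point is simply verifying the $\Sigma_k$--inexpressibility of the $L_k$'s, which reduces to the classical strictness of the hierarchy and does not require alphabet restrictions.
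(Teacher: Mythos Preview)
Your plan for the unboundedness half is plausible but quite different from the paper's.  Rather than counter encodings, the paper defines explicit languages $T_m$ over an $(m{+}2)$-letter alphabet that encode prefix representations of depth-$m$ AND/OR circuit trees and their evaluation to \emph{True}.  Definability of each $T_m$ in \fotwobet\ is established by an inductive game argument (Player~1's winning strategy in $(w,w')$ with $w\in T_{k+1}$, $w'\notin T_{k+1}$ is built from his strategy for $T_k$).  The alternation lower bound is then obtained not by direct EF games but by a reduction to circuit complexity: a uniform bound $d$ on the alternation depth of all $T_m$ would yield depth-$d$ polynomial-size circuits for every $AC^0$ language, contradicting Sipser's depth hierarchy.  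Your route could in principle be made to work, but would require you to actually produce the languages and carry out the $\Sigma_k$-game; the paper's detour through $AC^0$ avoids this.

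For the $|A|=2$ upper bound, there is a genuine gap.  The decomposition you assert---that $L$ is a boolean combination of ``$\exists x_1<\cdots<x_r$ with prescribed idempotent images and {\bf DA}-conditions on the intervening segments''---does not follow from $M(L)\in{\bf M}_e{\bf DA}$ by itself, and you have not said what the boundary positions $x_i$ are or why finitely many of them suffice.  The paper does not argue this way at all.  Instead it re-examines the explicit formula construction from the sufficiency proof of Theorem~\ref{thm.maintheorem}: one builds a specific congruence $\cong$ on $\{a,b\}^*$ (based on the block structure $a^{k_r}b^{\ell_r}\cdots a^{k_1}b^{\ell_1}$ of binary words, with lengths counted threshold~$T$), shows that the derived category of the relational morphism $\psi\phi^{-1}$ divides a monoid in {\bf DA} (this is where locality of {\bf DA} and the $eM_ee\in{\bf DA}$ hypothesis are used), and then writes $\phi^{-1}(m)$ as a union of sets $\{w:\psi(w)=k,\ \theta\xi(w)=n\}$.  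The bound of~3 comes from analyzing these concrete formulas: the condition $\theta\xi(w)=n$ is an $FO^2[<]$ sentence---hence rewritable in $\Pi_2[<]$---over atomic formulas of the form $(k,a,k\cdot\psi(a))(x)$, and each such atomic formula translates to a $\Sigma_2[<]$ formula describing the $\cong$-class of the prefix; the substitution yields $\Pi_3[<]$.  The condition $\psi(w)=k$ is itself in $\Sigma_2\cap\Pi_2$ because the quotient monoid $K=A^*/\cong$ is ${\cal L}$-trivial.

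In particular, the role of $|A|=2$ is not to bound the number of idempotent types, as you suggest: it is what makes the block congruence $\cong$ available, which in turn is what drives the derived-category argument and the $\Sigma_2$ translation of the atomic formulas.  Your outline does not identify any substitute for this step.
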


For alphabets of more than two letters, we conjecture that the alternation depth is also bounded by a linear function of $|A|.$

We stress that the alternation depth is measured with respect to arbitrary first-order sentences, not the variable-restricted sentences of \fotwobet.

\section{Outlines of the proofs}\label{proofsketches}

\subsection{A game characterization of $\pmb{\fotwobet}$.}

We write $(w_1,i_1)\equiv_k (w_2,i_2)$ if these two marked words satisfy exactly the same formulas of \fotwobet\  with one free variable of quantifier depth no more than $k.$

We overload this notation, and also write $w_1\equiv_k w_2$ if $w_1$ and $w_2$ are ordinary words that satisfy exactly the same {\it sentences} of \fotwobet\/ with quantifier depth $\leq k.$

Let $k\geq 0.$ The game is played for $k$ rounds in two marked words $(w_1,i_1)$ and $ (w_2,i_2)$ with a single pebble on each word.  At the start of the game, the pebbles are on the marks $i_1$ and $i_2.$   In each round, the pebble is moved to a new position in both words, producing two new marked words.

Suppose that at the beginning of a round, the marked words are $(w_1,j_1)$ and $(w_2,j_2).$  Player 1 selects one of the two words and moves the pebble to a different position.  Let's say he picks $w_1,$ and moves the pebble to $j_1',$ with $j_1\neq j_1'.$  Player 2 moves the pebble  to a new position $j_2'$ in $w_2.$  This response is required to satisfy the following properties:
\begin{enumerate}
\renewcommand\labelenumi{(\roman{enumi})}
\item The moves are in the same direction:  $j_1<j_1'$ iff $j_2<j_2'.$
\item The letters in the destination positions are the same: $w_1(j_1')=w_2(j_2').$
\item The set of letters jumped over is the same---that is, assuming $j_1<j_1'$:
\begin{eqnarray*}
&\hspace{-5mm}&\{a\in A: w_1(k)=a\text{ for some } j_1<k<j_1'\}=\\
&\hspace{-5mm}&\{a\in A: w_2(k)=a\text{ for some } j_2<k<j_2'\}.
\end{eqnarray*}
\end{enumerate}

Player 2 wins the 0-round game if $w_1(i_1)=w_2(i_2).$ Otherwise, Player 1 wins the 0-round game.

Player 2 wins the $k$-round game for $k>0$ if she makes a legal response in each of $k$ successive rounds, otherwise Player 1 wins.

The following theorem is the standard result about Ehrenfeucht-Fra\"{\i}ss\'e games adapted to this logic.

\begin{theorem}\label{thm.game_equiv}
$(w_1,i_1)\equiv_k(w_2,i_2)$ if and only if Player 2 has a winning strategy in the $k$-round game in the two marked words.
\end{theorem}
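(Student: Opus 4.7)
The plan is induction on $k$. The base case is immediate: at quantifier depth $0$, formulas with one free variable $x$ are Boolean combinations of the atoms $a(x)$ for $a\in A$ (the binary predicates need both variables), so $(w_1,i_1)\equiv_0(w_2,i_2)$ collapses to $w_1(i_1)=w_2(i_2)$, which is exactly Player 2's 0-round winning condition.

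For the inductive step I first establish a normal-form lemma: every $\fotwobet$ formula $\alpha(x,y)$ of quantifier depth $\leq k$ is equivalent to a Boolean combination of (a) atomic formulas in $x,y$, namely $a(x)$, $a(y)$, $x<y$, $y<x$, $x=y$, $a(x,y)$, $a(y,x)$, (b) formulas $\beta(x)$ of depth $\leq k$ with only $x$ free, and (c) formulas $\gamma(y)$ of depth $\leq k$ with only $y$ free. The reason is that in a two-variable logic any inner $\exists x\,\theta$ rebinds $x$, so the resulting subformula depends on at most $y$; symmetrically for $\exists y$. Performed bottom-up, this yields the claimed shape.

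For the direction (winning strategy $\Rightarrow$ equivalence), consider any $\exists y\,\alpha(x,y)$ of depth $\leq k+1$ holding in $(w_1,i_1)$ with witness $j_1'$. If $j_1'=i_1$, the formula $\alpha(x,x)$ has a single free variable and depth $\leq k$; a $(k{+}1)$-round winning strategy restricts to a $k$-round one, so the inductive hypothesis transfers $\alpha(x,x)$ to $(w_2,i_2)$. Otherwise Player 1 plays $j_1'$ and Player 2 responds by $j_2'$; apply the normal form to $\alpha$. Every atomic conjunct of type (a) has identical truth values on $(i_1,j_1')$ and $(i_2,j_2')$ by game rules (i)--(iii) combined with the 0-round condition $w_1(i_1)=w_2(i_2)$; every $\beta(x)$ of type (b) matches by the inductive hypothesis before the move; every $\gamma(y)$ of type (c) matches by the inductive hypothesis after the move, since Player 2 still has a winning strategy from $(w_1,j_1')$ vs.\ $(w_2,j_2')$. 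Hence $\alpha(i_2,j_2')$ holds, giving $(w_2,i_2)\models \exists y\,\alpha$. Conversely, given $(w_1,i_1)\equiv_{k+1}(w_2,i_2)$, to define Player 2's response to a move $j_1'\neq i_1$ let $\tau$ be the $\equiv_k$-type of $(w_1,j_1')$ (pinned down by a characteristic formula $\chi_\tau(y)$ of depth $\leq k$, finitely many such types existing), and let $D(x,y)$, $\Lambda(y)$, $B(x,y)$ be atomic conjunctions encoding, respectively, the direction of the move, the letter at $j_1'$, and the exact set of letters strictly between $i_1$ and $j_1'$. Then
$$\exists y\,\bigl(D(x,y)\wedge \Lambda(y)\wedge B(x,y)\wedge \chi_\tau(y)\bigr)$$
is a $\fotwobet$ formula of depth $\leq k+1$ holding in $(w_1,i_1)$, hence in $(w_2,i_2)$, producing a witness $j_2'$ that satisfies rules (i)--(iii) and leaves $(w_1,j_1')\equiv_k(w_2,j_2')$; the inductive hypothesis supplies Player 2's strategy for the remaining $k$ rounds.

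The main obstacle is the normal-form lemma: one has to verify carefully that inner quantifications really do collapse their scope to a single-free-variable subformula, and to treat the diagonal $y=x$ case cleanly. Everything else is standard Ehrenfeucht--Fra\"{\i}ss\'e bookkeeping adapted to the atoms $a(x,y)$.
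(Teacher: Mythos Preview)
Your argument is correct and is precisely the standard Ehrenfeucht--Fra\"{\i}ss\'e proof, specialized to the atoms of \fotwobet: the normal-form lemma (pulling every two-variable formula into a Boolean combination of atoms and one-variable subformulas) is the usual Scott/Gaifman-style reduction for two-variable logics, and your handling of the diagonal witness $j_1'=i_1$ and of the characteristic formula $\chi_\tau$ is exactly what is needed. One cosmetic point: when Player~1 moves in $w_2$ rather than $w_1$ the converse direction goes through by symmetry of $\equiv_{k+1}$, which you use implicitly; and $B(x,y)$ should of course be written with the arguments ordered according to the direction $D$. Neither affects correctness.

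As for comparison with the paper: there is nothing to compare. The paper does not prove Theorem~\ref{thm.game_equiv}; it simply asserts it as ``the standard result about Ehrenfeucht--Fra\"{\i}ss\'e games adapted to this logic'' and moves on. Your write-up supplies exactly the omitted standard argument.
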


We now define the $k$-round game in ordinary unmarked words $w_1,w_2\in A^*.$  Player 1 begins in the first round by placing a pebble on a position in one of the two words, and Player 2 must respond on a position in the other word containing the same letter.  Thereafter, they play the game in the two marked words that result for $k-1$ rounds.  The following is a direct consequence of the preceding theorem.

\begin{corollary}\label{cor.game_equiv}  Player 2 has a winning strategy in the $k$-round game in $w_1$ and $w_2$ if and only if $w_1\equiv_k w_2.$
\end{corollary}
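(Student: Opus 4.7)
The plan is to deduce this corollary directly from Theorem~\ref{thm.game_equiv} by unrolling exactly one round of the game. Every sentence of \fotwobet\ of quantifier depth $\le k$ is (up to boolean combination) of the form $\exists x\,\phi(x)$ or $\forall x\,\phi(x)$ where $\phi$ has quantifier depth $\le k-1$ and one free variable; the opening round of the unmarked game places a pebble on an arbitrary position whose letter the opponent must match, which mimics the semantics of $\exists x$ exactly (the letter-match requirement is harmless, since the atomic formula $a(x)$ already appears inside $\phi$).

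For the forward implication, assume Player 2 has a winning strategy and let $\psi$ be a sentence of depth $\le k$. Reducing by induction on boolean structure to $\psi=\exists x\,\phi(x)$, if $w_1\models\psi$ with witness $i_1$, treat $i_1$ as Player~1's opening move in $w_1$. Player~2's strategy supplies some $i_2$ with $w_2(i_2)=w_1(i_1)$ for which she wins the remaining $k-1$ marked rounds on $(w_1,i_1)$ and $(w_2,i_2)$, so by Theorem~\ref{thm.game_equiv} we have $(w_1,i_1)\equiv_{k-1}(w_2,i_2)$, hence $(w_2,i_2)\models\phi$ and $w_2\models\psi$. The case where Player~1 opens in $w_2$ is symmetric, as is the universal case.

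For the converse I would use characteristic formulas: since the alphabet and the number of variables are finite and the quantifier depth is bounded, the relation $\equiv_{k-1}$ on marked words has finitely many classes, each defined by a formula $\chi(x)$ of depth $\le k-1.$ Suppose, for contradiction, $w_1\equiv_k w_2$ but Player~2 loses; by determinacy of this finite perfect-information game, Player~1 has a winning opening move, say at $i_1$ in $w_1$. For every position $i_2$ in $w_2$, either $w_2(i_2)\ne w_1(i_1)$ or Player~1 still wins the remaining $k-1$ marked rounds on $(w_1,i_1)$ and $(w_2,i_2)$; in the latter case Theorem~\ref{thm.game_equiv} gives $(w_1,i_1)\not\equiv_{k-1}(w_2,i_2)$. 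Taking $\chi$ to be the characteristic formula of the $\equiv_{k-1}$-class of $(w_1,i_1)$, and noting that $\chi$ already forces the correct letter at $x$, both alternatives yield $(w_2,i_2)\not\models\chi$. Therefore $w_1\models\exists x\,\chi(x)$ but $w_2\not\models\exists x\,\chi(x)$, a sentence of depth $\le k$, contradicting $w_1\equiv_k w_2$.

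The real content is already in Theorem~\ref{thm.game_equiv}, so no serious obstacle remains. The main points to be careful about are bookkeeping ones: verifying that the letter-matching clause in the opening round of the unmarked game is precisely what is absorbed into the atomic $a(x)$ literal of the characteristic formula, and handling the degenerate $k=0$ case, where the unmarked game has no moves (Player~2 wins trivially) and the only sentences of depth $0$ are $\top$ and $\bot$ (so every pair is $\equiv_0$-equivalent).
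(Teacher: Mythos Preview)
Your proposal is correct and is exactly the ``direct consequence'' the paper has in mind: unroll the opening round of the unmarked game and invoke Theorem~\ref{thm.game_equiv} on the resulting marked words, using characteristic formulas for the converse direction. The paper gives no further detail, so there is nothing to compare beyond noting that your write-up simply makes explicit what the paper leaves implicit.
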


\subsection{Proof of Theorem~\ref{thm.invequalsthr}}

We introduce a game characterizing $FO^2[<,Th].$ Let $\theta$ be a function from $A$ to the positive integers. We consider formulas in $FO^2[<,Th]$ in which for all $a\in A,$ every occurrence of the predicate $(a,k)(x,y)$ has $k\leq\theta(a).$  Let's call these {\it $\theta$-bounded} formulas.

The rules of the game are the same as those for the \fotwobet\/ game, with this difference:  At each move, for each $a\in A,$ the number $m_1$ of $a$'s jumped by Player 1 must be equivalent, threshold $\theta(a),$ to the number $m_2$ of $a$'s jumped by Player 2. That is, either $m_1$ and $m_2$ are both greater than or equal to $\theta(a),$ or $m_1=m_2.$  Observe that the game for \fotwobet\/ is the case $\theta(a)=1$ for all $a\in A.$

Let us define, for marked words $(w_1,i_1),(w_2,i_2),$ $(w_1,i_1)\equiv^{\theta}_k(w_2,i_2)$ if and only if they satisfy exactly the same $\theta$-bounded formulas of quantifier depth less than or equal to $k.$  As with the case of \fotwobet, we also have a version of both the game and the equivalence relation for ordinary words.

It is easy to show that the analogues of Theorem~\ref{thm.game_equiv} and Corollary~\ref{cor.game_equiv} hold in this more general setting: Player 2 has a winning strategy in the $k$-round game in $(w_1,i_1),(w_2,i_2)$ if and only if $(w_1,i_1)\equiv^{\theta}_k(w_2,i_2),$ and likewise for ordinary words.

Let $\theta, \theta'$ be two functions from $A$ to the natural numbers that differ by one in the following sense:  $\theta'(b)=\theta(b)+1$ for exactly one $b\in A,$ and $\theta'(a)=\theta(a)$ for all $a\neq b.$  Obviously $\equiv^{\theta'}_k$ is finer than $\equiv^{\theta}_k.$  What we will show precisely is this:  $\equiv^{\theta}_{2k}$ refines $\equiv^{\theta'}_{k}.$ We prove refinement by a simple game argument, showing that if Player 2 has a winning strategy in the $2k$-move $\theta$ game in $(w_1,w_2)$ then she has a strategy in the $k$-move $\theta'$ game in the same two words. 

This will give  the desired result, because any threshold function $\theta$ can be built from the base threshold function that assigns 1 to each letter of the alphabet by a sequence of steps in which we add 1 to the threshold of each letter.  So it follows by induction that for any $\theta,$ $\equiv^{\theta}_k$ is refined by $\equiv_{k\cdot 2^r},$ where $r=\sum_{a\in A}\theta(a)-1.$  In particular, each $\equiv^{\theta}_k$-class is definable by a \fotwobet\/ sentence, although the quantifier depth of this sentence is exponential in the thresholds used.

So given a fixed $FO^2[<,Th]$ sentence $\phi$ there is a threshold such that the sentence is $\theta$-bounded. Let $k$ be the quantifier depth of $\phi$, then the sentence cannot distinguish words that are in the same equivalence class with respect to $\equiv^{\theta}_k$. As there are only finitely many $\theta$-bounded sentence of quantifier depth at most $k$, there are only finitely many such equivalence classes. By the argument above we can find a \fotwobet\ sentence for each equivalence class accepted by $\phi$ and the disjunction of these will be a sentence in \fotwobet\ that accepts the same models as $\phi$.

Observe that this argument applies to both ordinary words and marked words, and thus each formula of $FO^2[<,Th]$ with one free variable can similarly be replaced by an equivalent formula of \fotwobet.

\subsection{Proof of Theorem~\ref{thm.tlequivalence}} 

In terms of language classes, we obviously have 
$$\utlinv\subseteq\btlinv\subseteq\btlth,$$
and
$$\utlinv\subseteq\utlth\subseteq\btlth.$$

So it is enough to show that $\btlth\subseteq\utlinv.$ In performing these translations, we will only discuss the future modalities, since the past modalities can be treated the same way. 

We can directly translate any formula in $\btlth$ into an equivalent formula of $FO^2[<,Th]$ with a single free variable $x$:  A formula $\gfut{g}\psi,$ where $g$ is a boolean combination of threshold constraints, is replaced by a quantified formula $\exists y (y>x \wedge \alpha\wedge \beta(y))$, where $\alpha$ is a boolean combination of formulas $(a,k)(x,y)$ and $\beta$ is the translation of $\psi.$

We know from Theorem~\ref{thm.invequalsthr} that any formula of $FO^2[<,Th]$ can in turn be translated into an equivalent formula of \fotwobet. Furthermore, \fotwobet\/ is  equivalent to $\btlinv$: A simple game argument shows that equivalence of marked words with respect to $\btlinv$ formulas of modal depth $k$ is precisely the relation $\equiv_k$ of equivalence with respect to \fotwobet\/ formulas with quantifier depth $k.$ 

So it remains to show that $\btlinv\subseteq \utlinv$:  We do this by translating $\gfut{g}\phi$, where $g$ is a boolean combination of constraints of the form $\#B=0,$ into a formula that uses only single constraints of this form.  We can rewrite the boolean combination as the disjunction of conjunctions of constraints of the form $\#\{a\}=0$ and $\#\{a\}>0.$  Since, easily, $\gfut{g_1\vee g_2}\phi$ is equivalent to $\gfut{g_1}\phi \vee\gfut{g_2}\phi,$ we need only treat the case where $g$ is a conjunction of such constraints. We  illustrate the  general procedure  for translating such conjunctions with an example.  Suppose $A$ includes the letters $a,b,c.$  How do we express $\gfut{g}\phi,$ where $g$ is $(\#\{a\}=0)\wedge(\#\{b\}>0)\wedge(\#\{c\}>0)$? The letters $b$ and $c$ must appear in the interval between the current position and the position where $\phi$ holds.  Suppose that $b$ appears before $c$ does.  We write this as
$$\gfut{\#\{a,b,c\}=0}(b\wedge\gfut{\#\{a,c\}=0}(c\wedge\gfut{\#\{a\}=0}\phi)).$$
We take the disjunction of this with the same formula in which the roles of $b$ and $c$ are reversed.

\subsection{Proof of Theorem~\ref{thm.utlsat}}

Satisfiability of \utlinv\/ is \pspace-hard \cite{SC} since it includes
\ltlunsuc.
We observe that \btlinv\/ can be translated into \ltlbin\/ in polynomial time,
hence its satisfiability is \pspace-complete \cite{SC}.
Using a threshold constant $2^n$, written in binary in the formula with size $n$,
the  $2^n$-iterated Next operator $X^{(2^n)}$ can be expressed in \utlth,
so we obtain that its satisfiability is \expspace-hard \cite{AH}.
By an exponential translation of \btlth\/ into \btlinv
(the proof of Theorem~\ref{thm.invequalsthr} shows that such translation exists), or alternately by a polynomial 
translation into \cltlbin \cite{LMP}),
its satisfiability is \expspace-complete\/.

\subsection{Proof of Theorem~\ref{thm.fo2sat}}

We begin by reducing the exponential Corridor Tiling problem 
to satisfiability of \fotwobet. 
It is well known that this problem is \expspace-complete \cite{Furer}.

\paragraph{The exponential Corridor Tiling problem:} 
An instance $M$ is given by $(T,H,V,s,f,n)$
where $T$ is a finite set of tile types with $s,f \in T$,
the horizontal and vertical tiling relations
$H, V \subseteq T \times T$, and $n$ is a natural number.
A solution of the $2^n$ sized corridor tiling problem is
a natural number $m$ and map $\pi$ from
the grid of points $\{ (i,j) ~\mid~ 0 \leq i \leq 2^n,~0 \leq j < m\}$
to $T$ such that: \\
$\pi(0,0)=s$, $\pi(n-1,m-1)=f$ and for all $i$, $j$ on the grid, \\ 
$(\pi(i,j),\pi(i,j+1)) \in V$ and $(\pi(i,j),\pi(i+1,j)) \in H$.

\begin{lemma}
\label{lem.tiling}
Satisfiability of the two-variable logic \fotwobet\/ is \expspace-hard.
\end{lemma}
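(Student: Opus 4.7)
}
The plan is to reduce the exponential Corridor Tiling problem to satisfiability of \fotwobet\/ by a polynomial-time reduction. Given an instance $M=(T,H,V,s,f,n)$, I will describe a word encoding of a candidate tiling and a polynomial-size formula $\phi_M$ of \fotwobet\/ whose models are exactly the valid tilings. Since $\phi_M$ has size $O(\mathrm{poly}(n) + |T|^2)$ and exponential Corridor Tiling is \expspace-complete~\cite{Furer}, this establishes \expspace-hardness.

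First, I would fix an alphabet $A=T\cup\{0,1,\mathit{mark},\#\}$. A tiling of width $2^n$ and height $m$ is encoded by a word in which each cell $(i,j)$ is represented by a block of the form $\mathit{mark}\cdot b_0 b_1\cdots b_{n-1}\cdot t$, where $b_0\cdots b_{n-1}$ is the binary representation of the column index $i$ and $t\in T$ is the tile; rows are separated by the symbol $\#$. Using the $O(n)$-size $\suc^i$ predicate and the $O(n^2)$-size $EQ(x,y)$ and $O(n^3)$-size $INC_1(x,y)$ formulas already constructed in the Examples of Section~\ref{sec:examples}, I would then write $\phi_M$ as a conjunction of the following pieces:
\begin{enumerate}
\renewcommand\labelenumi{(\roman{enumi})}
\item a well-formedness formula asserting that the word consists of an alternation of cells and row separators, with each cell of the required shape;
\item boundary conditions: the first cell has counter $0$ and tile $s$, the last row is complete, and the last cell has counter $2^n-1$ and tile $f$;
\item initialisation of each row: the first cell after every $\#$ has counter $0$ and the last cell before every $\#$ has counter $2^n-1$;
\item a horizontal constraint, saying $\forall x \forall y$ that if $x,y$ are both $\mathit{mark}$ positions in the same row and $INC_1(x,y)$, then the tile letters at the ends of the two cells are $H$-compatible;
\item a vertical constraint, saying $\forall x \forall y$ that if $x,y$ are both $\mathit{mark}$ positions with $EQ(x,y)$ and with exactly one $\#$ strictly between them, then the corresponding tile letters are $V$-compatible.
\end{enumerate}
The H-compatibility and V-compatibility tests expand to a disjunction over the finitely many allowed tile pairs, each pair asserted by two tile-letter predicates guarded by $\suc^{n+1}(x)=t$ and $\suc^{n+1}(y)=t'$. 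Sitting inside $x,y$ quantifiers, these are all expressible with two variables, as all inner quantifications only require reusing the already-bound variable after relativising with $<$.

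The main obstacle is the vertical step~(v), because we must relate a cell to the cell directly below it in another row without a third variable. This is exactly where the between predicate (in the threshold form guaranteed by Theorem~\ref{thm.invequalsthr}) is essential: the condition ``exactly one $\#$ strictly between $x$ and $y$'' is written as $(\#,1)(x,y)\wedge\neg(\#,2)(x,y)$, and this, combined with $EQ(x,y)$ (which forces the column counters of $x$ and $y$ to agree), pins $y$ down to the unique cell directly below $x$. A secondary obstacle is the careful book-keeping of the two variables inside the $\suc^i$ and $EQ$ sub-formulas, where the single bound variable other than the current one has to be reused repeatedly; this is routine but pedantic, and is precisely the style of construction already demonstrated in the counter examples of Section~\ref{sec:examples}. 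Finally, adding up the sizes shows $|\phi_M|=O(n^3+n^2|T|^2)$, so the reduction is polynomial, and a straightforward soundness/completeness check (a valid tiling yields a model, and any model can be read off as a valid tiling) completes the argument.
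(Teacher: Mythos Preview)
Your reduction has the right overall architecture, but the crucial step~(v) does not go through in \fotwobet. You write ``exactly one $\#$ strictly between $x$ and $y$'' as $(\#,1)(x,y)\wedge\neg(\#,2)(x,y)$ and appeal to Theorem~\ref{thm.invequalsthr}. That theorem, however, only gives equi-expressiveness of \fotwobet\ and \fotwothr\ at the level of \emph{sentences} (languages), and the translation it provides is exponential. It does \emph{not} say that the open formula $(\#,2)(x,y)$ can be replaced by an \fotwobet\ formula with the same two free variables; in fact the paper explicitly remarks after Theorem~\ref{thm.invequalsthr} that the predicates $(a,k)(x,y)$ for $k>1$ are not expressible by single formulas of \fotwobet. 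Inside your $\forall x\forall y$ both variables are already committed, so you have no third variable to locate a second $\#$, and there is no way to rewrite $\neg(\#,2)(x,y)$ within the logic while keeping the formula polynomial (or even well-defined as an \fotwobet\ formula).

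This is exactly the obstacle the paper's proof is designed to avoid. Instead of a single row-separator symbol and a threshold count, the paper colours the marker positions cyclically with three colours $red$, $green$, $blue$, one colour per row. Then ``$x$ and $y$ lie in adjacent rows'' is captured (together with $EQ(x,y)$) by the purely \fotwobet\ condition $\neg red(x,y)\lor\neg blue(x,y)\lor\neg green(x,y)$: if two or more full rows lie strictly between $x$ and $y$, all three colours must occur there, so the disjunction fails. This reduces the ``at most one boundary crossed'' test to a disjunction of negated between-atoms, which is the essential trick your proposal is missing. Once you replace the $\#$-counting by the three-colour device (on markers or, equivalently, on row separators), the rest of your outline---the counter formulas $\suc^i$, $EQ$, $INC_1$, the horizontal check via ``no marker in between'', and the boundary conditions---matches the paper's argument and yields a polynomial-size \fotwobet\ sentence.
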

\begin{proof}
Given an instance $M$ as above of a Corridor Tiling problem,
we encode it as a sentence  $\phi(M)$ of size $poly(n)$ with a
modulo $2^n$ counter $C(x)$ encoded serially with $n+1$ letters 
as in the example in Section \ref{sec:examples}.
The marker now represents a tile and a colour from $red$, $blue$ and $green$ 
(requiring subalphabet size $3|T|$).
Thus the $2^{n} \times m$ tiling is represented by a word of length
$m(n+1)2^n$ over an alphabet of size $3|T|+2$.
The claim is that $M$ has a solution iff $\phi(M)$ is satisfiable.

The sentence  $\phi(M) \in \fotwobet$ is a conjunction of 
the following properties.
The key idea is to cyclically use monadic predicates 
$red(x), green(x), blue(x)$ for assigning colours to rows.

\begin{itemize}
\item
Each marker position has exactly one tile and one colour.
\item
The starting tile is $s$, the initial colour is $red$, 
the initial counter bits read $0^n$, the last tile is $f$.
\item
Tile colour remains same in a row and it cycles in order 
$red,green,blue$ on row change.
\item
The counter increments modulo $2^n$ in consecutive positions.  
\item
For horizontal compatibility we check:\\
$
\begin{array}{l}
\forall x \forall y \such mark(x) \land mark(y) \land \neg mark(x,y)
\limplies \\
 \hspace*{1cm} \orover_{(t_1,t_2) \in H} ~ t_1(x) \land t_2(y) 
\end{array}
$.
\item
For vertical compatibility, %
we check that $x,y$ are in adjacent rows by invariance of lack of one colour.
We check that $x$ and $y$ are in the same column by checking that 
the counter value (which encodes column number) is the same: \\
$ \forall x \forall y \such  (x < y) \land 
(\neg red(x,y) \lor \neg blue(x,y) \lor \neg green(x,y)) \land EQ(x,y) \limplies
\orover_{(t_1,t_2) \in V} ~ t_1(x) \land t_2(y)$.
\end{itemize}

It is easy to see that we can effectively translate an instance $M$ of 
the exponential corridor tiling problem into $\phi(M)$ in time 
polynomial in $n$. The translation preserves satisfiability. 
Hence, by reduction, satisfiability of $\fotwobet$ over bounded
as well as unbounded alphabets is $\expspace$-hard.
\end{proof}

\begin{lemma}
\label{lem.reduction}
There is a satisfiability-preserving polynomial time reduction from 
the logic \fotwothr\/ to the logic \fotwobet.
\end{lemma}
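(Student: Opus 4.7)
The plan is to give a satisfiability-preserving reduction that, instead of syntactically translating threshold atoms (which by Theorem~\ref{thm.invequalsthr} would incur an exponential blow-up), augments the intended model with counter information that \fotwobet\/ can both enforce and read back. Let $\phi\in\fotwothr$ have size $s$; since threshold constants are written in binary, the largest constant $K$ appearing in $\phi$ satisfies $r:=\lceil\log_2(K+1)\rceil=O(s)$. For each letter $a\in A$ that occurs inside some threshold atom of $\phi$, the reduction earmarks $r$ bit positions that will encode a \emph{saturated} running count of $a$-occurrences up to the current point in the word.

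The construction enlarges the alphabet to $A'=A\cup\{pay,\#_a,0_a,1_a:a\in A\}$. Payload positions (marked by the letter $pay$) carry the original letters of a prospective model of $\phi$, and after each payload position we require, for every relevant $a$, a fixed block $\#_a\,b_1\cdots b_r$ recording the number of $a$'s seen at payload positions up to and including the current one, saturated at $K$. A structural \fotwobet\/ formula $\chi$ enforces this layout, together with correct initialisation and correct update from one payload to the next, using the $\suc^i$, $EQ$ and $INC_c$ macros of Section~\ref{sec:examples}, each of size polynomial in $r$. A translation $\phi'$ of $\phi$ is obtained by relativising every quantifier of $\phi$ to payload positions and replacing each atomic predicate $(a,k)(x,y)$ by an $O(r^2)$-size \fotwobet\/ formula stating that the $a$-counter in the block immediately preceding payload position $y$, minus the $a$-counter in the block immediately following payload position $x$, is at least $k$ (the latter comparison being a boolean combination of bit extractions $\suc^i(\cdot)=0$ or $1$, exactly as in the $EQ$ and $LT$ templates). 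The reduction outputs $\chi\wedge\phi'$.

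Correctness is symmetric. Given a word $w$ satisfying $\phi$, there is a unique augmentation $w'$ over $A'$ that meets the layout and whose counter blocks record the saturated $a$-counts; by construction $w'\models \chi\wedge\phi'$. Conversely, any $w'\models\chi\wedge\phi'$ has the property that its counter blocks faithfully store the correct saturated counts (because $\chi$ forces this), so its projection onto the payload letters is a word satisfying $\phi$. The size of $\chi\wedge\phi'$ is polynomial in $s$, as $r$ is linear in $s$, and there are at most $s$ threshold atoms to translate; the reduction plainly runs in polynomial time.

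The one thing to verify carefully, and the only real obstacle, is the two-variable discipline: checking the block layout and reading the stored counters must be done using only $x$ and $y$, never naming a third position. This is handled exactly by the successor-navigation trick of Section~\ref{sec:examples}, where the $\suc^i(x)=z$ construction unfolds into an $O(i)$-size \fotwobet\/ formula that accesses any individual bit of a block from its leading payload position without introducing a third variable. Using these macros, both $\chi$ (layout, initialisation, saturated increment) and the translated threshold atoms (bitwise subtraction compared against a constant) can be expressed as \fotwobet\/ formulas of size polynomial in $r$, completing the reduction.
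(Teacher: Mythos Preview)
The overall architecture—pad the model with per-letter running counters and read interval counts off as differences—is the right intuition, and your handling of the two-variable discipline via fixed-offset $\suc^i$ navigation is sound. But the word \emph{saturated} breaks the reduction. Once the prefix $a$-count has reached $K$ it never moves again, so for any two payload positions $x<y$ that both lie beyond the $K$-th occurrence of $a$, your translated atom reports a difference of $0$, independently of how many $a$'s actually sit between them. Concretely, take $\phi=(\forall x\,a(x))\wedge\exists x\exists y\,(a,K)(x,y)$, which is satisfied by $w=a^{K+2}$. In the augmentation of any all-$a$ word the stored $a$-counter at the first payload is already $1$ and at every later payload is at most $K$, so the largest difference your translated atom can ever witness is $K-1$; the test ``$\geq K$'' fails everywhere, and $\chi\wedge\phi'$ is unsatisfiable even though $\phi$ is satisfiable. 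Saturated prefix counts simply do not support interval-count queries.

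The paper's proof supplies exactly the missing idea: the global counter is taken \emph{modulo} $2^r$, not saturated, and three auxiliary colour predicates cycle each time the counter wraps around. Then the \fotwobet\/ assertion ``one of the three colours is absent in the open interval $(x,y)$'' certifies that at most one wrap occurred there, and combined with an $EQ_g$/$LT_g$-style comparison of the modular values at $x$ and $y$ this recovers the interval count up to the relevant threshold. Your block layout, $\suc^i$ navigation, and relativisation to payload positions all survive unchanged; only the saturated increment must be replaced by a modular increment together with this colour bookkeeping.

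A small side remark: with $A'=A\cup\{pay,\dots\}$ in a one-letter-per-position model a payload position cannot simultaneously carry the marker $pay$ and an original letter $a\in A$. Either tag the payload letters (alphabet $A\times\{\mathit{pay}\}$) or pass to the propositional presentation the paper explicitly permits; this is cosmetic and easily repaired.
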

\begin{proof}
We give a polytime reduction from \fotwothr\/ to \linebreak \fotwobet\/ 
which preserves satisfiability. 
We consider in the extended syntax
a threshold constraint  $\#a(x,y)=k$ where $a$ is a letter or a proposition and
$k$ is a natural number.
The key idea of the reduction is illustrated by the following example.

For each threshold constraint $g$ of the form $\#a(x,y)=2^r$,
we specify a \emph{global} modulo $2^r$ counter $C_g$ 
using monadic predicates
$p_1(x), \ldots, p_r(x)$. (This requires a symbolic alphabet 
where several such predicates may be true at the same position.) 
By ``global'' we mean that the counter $C_g$ has value $0$ 
at the beginning of the word and it increments whenever $a(x)$ is true. 
This is achieved using the formula:
\[
\forall x,y \such y=x+1 \limplies (a(x) \limplies INC_g(x,y)) \land
 (\neg a(x) \limplies EQ_g(x,y))
\]
Also, we have three colour predicates $red_g(x)$, $blue_g(x)$ and $green_g(x)$
where the colour at the beginning of the word is $red_g$,
and we change the colour cyclically each time the counter $C_g$
resets to zero by overflowing.
As in the proof of Lemma~\ref{lem.tiling},
invariance of lack of one colour and the fact that $x,y$ have different colours
ensures that the counter overflows at most once. 
We replace the constraint $g$
of the form $\#a(x,y)=2^r$ by an equisatisfiable formula:
\[
\begin{array}{l}
x<y \land EQ_g(x,y) \land \#a(x,y) > 0 \land\\
(\neg red_g(x,y) \lor \neg blue_g(x,y) \lor \neg green_g(x,y))
\end{array}
\]
More generally, we define a polynomial sized quantifier free formula
$INC_{g,c}(x,y)$ for any given constant $c$ with $2^{r-1} <c \leq 2^r$ using
propositions $p_1, \ldots, p_r$ and three colour predicates.
The formula asserts that $\#a(x,y)+ c = 2^r$. 
Using this we can encode the constraint $\#a(x,y)=2^r-c$ for any $c$. 
Similarly to $EQ_g(x,y)$, we can also
define formulae $LT_g(x,y)$ to denote that its counter $C_g(x) < C_g(y)$,
$GT_g(x,y)$ to denote that $C_g(x) > C_g(y)$, etc.
Hence any form of threshold counting
predicate can be replaced by an equisatisfiable formula, with
all these global counters running from the beginning of the word to the end.
Thus we have a polynomially sized equisatisfiable reduction from
\fotwothr\/ to \fotwobet. 
\end{proof}

To complete the proof of Theorem~\ref{thm.fo2sat},
the upper bound for \fotwothr\/ comes from
Lemma \ref{lem.reduction}, an exponential translation
from \fotwobet\/ to \btlinv\/ using an order type argument similar to \cite{EVW}
(our Theorem~\ref{thm.tlequivalence} also points to this equivalence),
and the \pspace\/ upper bound for \btlinv\/ (Theorem~\ref{thm.utlsat}).

\subsection{Proof of necessity in Theorem~\ref{thm.maintheorem}}

Here we outline the proof of the direction of Theorem~\ref{thm.maintheorem} stating that every language definable in \fotwobet\/ has its syntactic monoid in ${\bf M}_e{\bf DA}.$  This is all we will need to prove Corollary~\ref{cor.strictinclusion} and the first assertion of Theorem~\ref{thm.alternationdepth}.

We first use a game argument to prove the following fact: 

\begin{lemma}\label{lem.morphismclosure}
Lef $k\geq 0,$ $A,B$ finite alphabets, and $f:B^*\to A^*$ a monoid homomorphism.  Let $w_1,w_2\in B^*.$  If  $w_1\equiv_k w_2,$ then $f(w_1)\equiv_k f(w_2).$
\end{lemma}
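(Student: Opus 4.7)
The approach is to take a winning strategy $\sigma$ for Player 2 in the $k$-round \fotwobet\ game on $w_1, w_2$ --- which exists by Corollary~\ref{cor.game_equiv} --- and lift it to a winning strategy in the $k$-round game on $f(w_1), f(w_2)$. Appealing to Corollary~\ref{cor.game_equiv} once more then yields $f(w_1) \equiv_k f(w_2)$.

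The key bookkeeping device is a canonical decomposition of any position $p$ of $f(w_i)$ as a pair $(\ell, t)$, where $\ell$ is the unique index in $w_i$ with $p$ lying inside $f(w_i(\ell))$ (in particular $f(w_i(\ell)) \neq \varepsilon$), and $t$ is the offset within $f(w_i(\ell))$. Player 2 maintains the invariant that the current pebbles decompose as $(\ell_1, t)$ in $f(w_1)$ and $(\ell_2, t)$ in $f(w_2)$ with $w_1(\ell_1) = w_2(\ell_2)$, and that $(\ell_1, \ell_2)$ is a state reachable under $\sigma$ in a shadow game played on $w_1, w_2$.

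Each move of Player 1 in the $f$-game is handled in one of two ways. If the move stays within the same letter image ($\ell_1' = \ell_1$, an \emph{internal} move), Player 2 mirrors the new offset $t'$ in $f(w_2(\ell_2)) = f(w_1(\ell_1))$ and advances no step in the shadow. If the move crosses into a different letter image ($\ell_1' \neq \ell_1$, an \emph{external} move), she simulates one shadow round with Player 1's move $\ell_1 \to \ell_1'$, uses $\sigma$ to obtain a response $\ell_2'$ in $w_2$, and plays $(\ell_2', t')$ in $f(w_2)$. The initial (unmarked) round is handled in the same spirit, invoking $\sigma$'s choice of starting pebble. Since the $f$-game has at most $k$ rounds, at most $k$ shadow rounds are consumed, which $\sigma$ supports.

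Legality of the constructed response reduces to the three game conditions. Same direction: in the internal case it follows from the sign of $t' - t$, and in the external case from the fact that the order of positions in $f(w_i)$ agrees with the order of the corresponding letter indices, which $\sigma$ preserves. Same destination letter: in the internal case we move to the same offset in the same word, and in the external case $w_1(\ell_1') = w_2(\ell_2')$ gives $f(w_1(\ell_1')) = f(w_2(\ell_2'))$, so the offset $t'$ selects the same letter. Same set of letters jumped over is the crux and the main obstacle. For an external move, the jumped interval in $f(w_i)$ splits canonically into a suffix of $f(w_i(\ell_i))$, a middle block $\bigcup_{\ell_i < j < \ell_i'} f(w_i(j))$, and a prefix of $f(w_i(\ell_i'))$; the suffix and prefix portions contribute identical $A$-letter sets on the two sides because the bracketing $w_i$-letters and offsets coincide, and the middle block yields $\bigcup_{b \in S_i}\mathrm{letters}(f(b))$ where $S_i = \{w_i(j) : \ell_i < j < \ell_i'\}$, which agree because $\sigma$'s shadow move enforces $S_1 = S_2$. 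The internal case is easier still, since both players jump inside the same word $f(w_1(\ell_1))$.
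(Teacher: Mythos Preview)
Your proposal is correct and follows essentially the same approach the paper indicates: the paper simply says ``we first use a game argument to prove the following fact'' and omits the details, and what you have written is precisely the natural game-lifting argument one expects---decompose positions in $f(w_i)$ into a $w_i$-index plus an offset, mirror internal moves, and route external moves through a shadow play of the $B^*$ game using Player~2's winning strategy there. The verification of the three legality conditions, in particular the decomposition of the jumped-letter set into suffix, middle, and prefix contributions, is exactly right.
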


Now let $L\subseteq A^*$ be definable by a sentence of \fotwobet.   $L$ is a union of $\equiv_k$-classes for some $k,$ and thus $L$ is recognized by the quotient monoid $N=A^*/\equiv_k,$ so $M(L)$ is a homomorphic image of this monoid.  Consequently, it is sufficient to show that $N$ itself is a member of the variety ${\bf M}_e{\bf DA}.$  We denote by $\psi:A^*\to N$ the projection morphism onto this quotient.

Take $e=e^2\in N$ and $x,y\in eN_ee.$ We will show
$$(xy)^{\omega}x(xy)^{\omega}=(xy)^{\omega}.$$
This identity characterizes the variety {\bf DA} (see, for example Diekert, {\it et al.}~\cite{DGK}), so this will prove $N\in {\bf M}_e{\bf DA},$ as required. We can write
$$x=em_1\cdots m_re, y=em_1'\cdots m_s'e,$$
where each $m_i$ and $m_j'$ is $\leq_{\cal J}$-above $e.$ Thus we have
$$e=n_{2i-1}m_in_{2i}=n_{2r+2j-1}m_j'n_{2r+2j},$$
for some $n_1,\ldots n_{2r+2s}\in N.$

Now let $B$ be the alphabet
$$\{a_1,\ldots,a_r,b_1,\ldots,b_s,c_1,\ldots,c_{2r+2s}\},$$
  We define a homomorphism $\phi:B^*\to N$ by mapping each $a_i$ to $m_i,$ $b_j$ to $m_j',$ and $c_k$ to $n_k.$  Since $\psi:A^*\to N$ is onto, we also have a homomorphism $f:B^*\to A^*$ satisfying $\psi\circ f=\phi.$ 
We define words $v,{\bf a},{\bf b}, X_{S,T}\in B^*,$ where $S,T>0,$ as follows:
$$v=\prod_{i=1}^rc_{2i-1}a_ic_{2i}\cdot\prod_{i=r+1}^sc_{2i-1}b_{i-r}c_{2i},$$
$${\bf a}=a_1\cdots a_r, {\bf b}=b_1\cdots b_r,$$
$$X_{S,T}=(v^S{\bf a}v^S{\bf b}v^S)^T.$$

 Observe that
$$\phi(v)=e, \phi(X_{S,T})=(xy)^T,\phi({\bf a})=m_1\cdots m_r,\phi({\bf b})=m'_1\cdots m'_s.$$
Since all languages recognized by $N$ are definable in $FO[<],$ $N$ is aperiodic, and thus for sufficiently large values of $T,$ we have $\phi(X_{S,T})=(xy)^{\omega}.$  Thus for sufficiently large $T$ and all $S$ we have
$$\phi(X_{S,T}X_{S,T})=(xy)^{\omega}(xy)^{\omega}=(xy)^{\omega},$$
 $$\phi(X_{S,T}{\bf a}X_{S,T})=(xy)^{\omega}x(xy)^{\omega}.$$

By Lemma~\ref{lemma.xst} (proved next),

 $$X_{S,T}X_{S,T}\equiv_k X_{S,T}{\bf a}X_{S,T}.$$

Then by Lemma~\ref{lem.morphismclosure},

$$f(X_{S,T}X_{S,T})\equiv_k f(X_{S,T}{\bf a}X_{S,T}).$$
So
\begin{eqnarray*}
(xy)^{\omega} &=&\phi(X_{S,T}X_{S,T})\\
&=& \psi(f(X_{S,T}X_{S,T}))\\
&=& \psi(f(X_{S,T}{\bf a}X_{S,T}))\\
&=& \phi(X_{S,T}{\bf a}X_{S,T})\\
&=& (xy)^{\omega}x(xy)^{\omega},
\end{eqnarray*}
as claimed.

\subsection{Proof of Lemma~\ref{lemma.xst}}

The essential idea is to show that a certain identity is satisfied defining this variety of finite monoids, which we do by application of our game characterization of $FO^2[<,Inv].$ 

We begin by considering an alphabet $B$ of the form
$$B=\{a_1,\ldots,a_r,b_1,\ldots,b_s,c_1,\ldots,c_t\},$$
where $t=2r+2s.$
Let $v\in B^*$ be the word
$$c_1a_1c2\cdot c_3a_2c_4\cdots c_{2r-1}a_rc_{2r}\cdot c_{2r+1}b_1c_{2r+2}\cdots c_{2r+2s-1}b_sc_{2r+2s}.$$
 Let $R>0.$ We will build words by concatenating the factors
$${\bf a}=a_1\cdots a_r, {\bf b}=b_1\cdots b_r, v^R.$$

For example, with $R=4,$ two such words are

$$v^4{\bf a}v^4{\bf b}v^8{\bf a}v^4{\bf a}v^{12}, v^4{\bf a}v^4{\bf ba}v^4{\bf a}v^{12}.$$
If the first and last factors of such a word are $v^R,$ and if two consecutive factors always include at least one $v^R,$ then we call it an {\it $R$-word}.  The first word in the example above is a 4-word, but the second is not, because of the consecutive factors {\bf b} and {\bf a}. In what follows, we will concern ourselves exclusively with $R$-words. The way in which we have defined the word $v$ ensures that the factorization of an $R$-word in the required form is unique.

 Let $m\geq 0,$ and $R> 2m.$  We'll define special factors in $R$-words that we call {\it $m$-neighborhoods}. One kind of $m$-neighborhood is a factor  of the form  $v^m{\bf a}v^m$ or $v^m{\bf b}v^m,$ where the ${\bf a}$ or ${\bf b}$ is one of the original factors used to build the word.  In addition, we say that the prefix $v^m$ and suffix $v^m$ are also $m$-neighborhoods.  So, for example, the 1-neighborhoods in the 4-word in the example above are indicated here by underlining:
 
 $$\underline{v}\cdot v^2\cdot\underline{v{\bf a}v}\cdot v^2\cdot\underline{v{\bf b}v}\cdot v^6\cdot\underline{v{\bf a}v}\cdot v^2\cdot\underline{v{\bf a}v}\cdot v^{10}\cdot\underline{v}.$$
 
  The condition $R> 2m$ ensures that $m$-neighborhoods are never directly adjacent, so that every position belongs to at most one $m$-neighborhood, and some of the $v$ factors are contained in no $m$-neighborhood.
 
 Consider two marked words $(w_1, i_1), (w_2,i_2)$ where $w_1, w_2$ are $R$-words. We say these marked words are $\equiv_0^m$-equivalent if $w_1(i_1)=w_2(i_2),$ and if either $i_1$ and $i_2$ are in the same position in identical $m$-neighborhoods, or if neither $i_1$ nor $i_2$ belongs to a neighborhood.   For instance, if $m=2,$ and $i_1$ is on the third position of a 2-neighborhood $v^2{\bf a}v^2$ in $w_1,$ then $i_2$ will be on the third position of a 2-neighborhood $v^2{\bf a}v^2$ of $w_2.$  If $m=0,$ then we only require $w_1(i_1)=w_2(i_2),$ so $\equiv_0^0$ equivalence is the same as $\equiv_0$-equivalence.
 
 We now play our game  in marked words $(w_1,i_1), (w_2,i_2),$  where $w_1, w_2$ are $R$-words. We add the rule that at the end of every round, the two marked words $(w_1,j_1)$ and $(w_2,j_2)$ are $\equiv_0^m$-equivalent.  If Player 2 has a winning strategy in the $k$-round game with this additional rule, we say that the starting words $(w_1,i_1)$ and $(w_2,i_2)$  are  $\equiv_k^m$-equivalent. Once again, the case $m=0$ corresponds to ordinary $\equiv_k$-equivalence.  
 
 We will call this stricter version of the game the {\it $m$-enhanced game}. As with the original game, we can define a version of the $m$-enhanced game for ordinary (that is, unmarked) $R$- words: In the first round, Player 1 places his pebble on a position in either of the words, and Player 2 responds so that the resulting marked words are $\equiv_0^m$-equivalent. Play then proceeds as described above for $k-1$ additional rounds.  We write $w_1\equiv^m_k w_2$ if Player 2 has a winning strategy in this $k$-round $m$-enhanced game.
 
 Let $S,T>0,$ and let $X_{S,T}$ denote the $S$-word
 
 $$(v^S{\bf a}v^S{\bf b}v^S)^T.$$
 
 We claim:
\begin{lemma}~\label{lemma.xst}
For each $m\geq 0,$ $k\geq 1,$ there exists $R$ such that if $S,T\geq R,$
 $$X_{S,T}X_{S,T}\equiv_k^m X_{S,T}{\bf a}X_{S,T}.$$
\end{lemma}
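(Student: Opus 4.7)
The plan is to define an explicit winning strategy for Player 2 in the $m$-enhanced $k$-round game on $(X_{S,T}X_{S,T},\, X_{S,T}\mathbf{a}X_{S,T})$, for $R$ chosen exponentially large in $k$; together with the obvious game-equivalence analogue of Theorem~\ref{thm.game_equiv} for this enhanced game, this yields the claim. The setup is that both words admit a canonical decomposition as alternating pure-$v^n$ blocks (with $n\geq S\geq R$) separated by single $\mathbf{a}$ or $\mathbf{b}$ factors. The $m$-neighborhoods are the subfactors $v^m\mathbf{a}v^m$, $v^m\mathbf{b}v^m$, plus the prefix and suffix $v^m$. Outside of a single central window the two words agree block-for-block; at the central junction $X_{S,T}X_{S,T}$ has a single block $v^{2S}$ while $X_{S,T}\mathbf{a}X_{S,T}$ has $v^S\mathbf{a}v^S$, giving the latter one extra $\mathbf{a}$-neighborhood.

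I would have Player 2 maintain, across a schedule $R=R_0>R_1>\cdots>R_k\geq 2m+1$ with each $R_{i+1}$ a constant fraction of $R_i$, the following invariant after round $i$: (a) the two pins lie in $m$-neighborhoods of the same type at identical local offsets, or both in pure-$v$ regions, and (b) the pin positions relative to the central junction are compatible in the sense that in each word there are at least $R_i$ identical $\mathbf{a}/\mathbf{b}$-neighborhoods on either side of the pin before one hits the junction (or runs off the end of the word). Round-by-round Player 2 mirrors: within a common block she copies offsets exactly, and across blocks she jumps to an appropriate $m$-neighborhood or pure-$v$ region in the other word, chosen freely from the abundant stock of identical candidates on each side of the junction so as to preserve~(b).

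Legality of her moves is clean: direction and destination letter are enforced by mirroring; the ``jumped-letter set'' condition is essentially automatic because a single factor $v$ contains every letter of $B$ exactly once, so any jump that crosses at least one complete $v$-block sweeps the full alphabet $B$ in both words, while local jumps inside one $m$-neighborhood have identical scopes by~(a). I expect the main obstacle to be the case when Player 1 pins on the extra $\mathbf{a}$ of $X_{S,T}\mathbf{a}X_{S,T}$ (or, symmetrically, pins into the central $v^{2S}$ of $X_{S,T}X_{S,T}$ in a way that forces a choice of side): Player 2 must then pin one of the ``genuine'' $\mathbf{a}$-neighborhoods in $X_{S,T}X_{S,T}$, and whichever one she picks creates an asymmetry in the $\mathbf{a}/\mathbf{b}$-neighborhood counts on the two sides of the junction. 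The trick is to use the abundance $T\geq R$ to pick a pin deep enough on one side that the resulting asymmetry is hidden behind at least $R_{i+1}$ further identical neighborhoods, so the game continues for another round. Arranging the schedule $R_i$ so that it shrinks by only a bounded factor per round while still absorbing both the bisection cost and this one-time asymmetry is the technical heart of the calculation, and is what forces $R$ to be exponential in $k$.
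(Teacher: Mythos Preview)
Your dichotomy for the jumped-letter condition is incomplete, and this is where the argument breaks. You handle (i) jumps that cross at least one full copy of $v$ (so the jumped set is all of $B$), and (ii) jumps that stay inside a single $m$-neighborhood (so mirroring by offset works). But there is a third case you do not address: a short jump that crosses the \emph{boundary} of an $m$-neighborhood. Concretely, invariant~(a) allows the two pebbles to sit on the same letter just outside two $m$-neighborhoods of \emph{different} types---say both on the first $c_1$ of the $(m{+}1)$-st $v$-factor, but in one word this $v$-factor follows $v^m\mathbf{a}v^m$ and in the other it follows $v^m\mathbf{b}v^m$. If Player~1 now moves one step left into his $m$-neighborhood (jumping no letters), Player~2's only reply with empty jumped set lands her inside the wrong neighborhood type, and she loses. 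So $m$-level matching together with your global depth schedule~(b) is not a sufficient invariant.

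The missing idea is exactly the one the paper exploits: you must carry \emph{stronger} local matching at intermediate rounds and spend it down. In the paper's induction on $k$, the step from $k$ to $k+1$ for the $m$-enhanced game invokes the $(m{+}1)$-enhanced $k$-round game on the central factors; unrolled, this means that after round $i$ the pebbles are matched at the $(m{+}k{-}i)$-neighborhood level, and it is precisely this extra $|v|$ of cushion that absorbs a boundary-crossing short move in the next round. If you strengthen your invariant~(a) to $(m{+}k{-}i)$-neighborhood matching, your direct-strategy outline becomes essentially the paper's argument written iteratively rather than inductively. Two further remarks: first, the ``main obstacle'' you anticipated (Player~1 landing on the extra central $\mathbf{a}$) is actually the easy part---any of the $\Theta(T)$ genuine $\mathbf{a}$-neighborhoods will do, and no exponential bisection is needed; second, with the corrected invariant the required $R$ is linear in $k$ (roughly $R>2(m+k)$ for $S$, and $T$ grows by one per round), not exponential.
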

\begin{proof} 
  We prove this by induction on $k,$ first considering the case $k=1$ with arbitrary $m\geq 0.$ Choose $R> 2m,$ and $S,T\geq R.$ If Player 1 plays in either word inside one of the factors $X_{S,T},$ then Player 2 might try to simply mimic this move in the corresponding factor $X_{S,T}$ in the other word.  This works unless Player 1 moves near the center of one of the two words.  For example, if Player 1 moves in the final position of the first $X_{S,T}$ in $X_{S,T}X_{S,T},$ then this position is contained inside a factor $v$ and does not belong to any $m$-neighborhood, but the corresponding position in the other word belongs to a neighborhood of the form $v^m{\bf a}v^m,$ so the response is illegal. Of course we can solve this problem: $X_{S,T}$ itself contains factors $v$ that do not belong to any $m$-neighborhood (because $R> 2m$).  Conversely, if Player 1 moves anywhere in the central $m$-neighborhood $v^m{\bf a}v^m$ in $X_{S,T}{\bf a}X_{S,T}$ then Player 2 can find an identical neighborhood inside $X_{S,T}$ and reply there.
  
Now let  $k\geq 1$ and suppose that the Proposition is true for this fixed $k$ and all $m\geq 0.$ We will show the same holds for $k+1.$ Let $m\geq 0.$ Then by the inductive hypothesis, there exist $S,T$ such that
$$X_{S,T}X_{S,T}\equiv^{m+1}_k X_{S,T}{\bf a}X_{S,T}.$$
We will establish the proposition by showing
$$X_{S,T+1}X_{S,T+1}\equiv^{m}_{k+1}X_{S,T+1}{\bf a}X_{S,T+1}.$$
Observe that
$$X_{S,T+1}X_{S,T+1}=X_{S,1}(X_{S,T}X_{S,T})X_{S,1},$$
$$X_{S,T+1}{\bf a}X_{S,T+1}=X_{S,1}(X_{S,T}{\bf a}X_{S,T})X_{S,1}.$$
We will call the factors $X_{S,1}$ occurring in these two words the {\it peripheral factors}; the remaining letters make up the {\it central regions}. We prove the proposition by presenting a winning strategy for Player 2 in the $(k+1)$-round $m$-enhanced game in these two words. 

For the first $k$ rounds, Player 2's strategy is as follows:
\begin{itemize}
\item If Player 1 moves into either of the peripheral factors in either of the words, Player 2 responds at the corresponding position of the corresponding peripheral factor in the other word.
\item If Player 1 moves from a peripheral factor into the central region of one word, Player 2 treats this as the opening move in the $k$-round $(m+1)$-enhanced game in the central regions, and responds according to her winning strategy in this game.
\item If Player 1 moves from one position in the central region of a word to another position in the central region, Player 2 again responds according to her winning strategy in the $k$-round $(m+1)$-enhanced game in the central regions.
\end{itemize}

 We need to show that each move in this strategy is actually a legal move in the game---that in each case the sets of letters jumped by the two players are the same, and that the $m$-neighborhoods match up correctly.  It is trivial that the $m$-neighborhoods match up---in fact, the $(m+1)$-neighborhoods do, so we concentrate on  showing that the sets of jumped letters are the same.

This is clearly true for moves that remain within a single peripheral factor or move from one peripheral factor to another.  What about the situation where Player 1 moves from a peripheral factor to the central region?  We can suppose without loss of generality that the move begins in the $i^{th}$ position of the left peripheral factor in one word, and jumps right to the $j^{th}$ position of the central region of the the word. If $j\leq |v|,$ then this move is within an $m$-neighborhood of the central factor, namely the prefix of the central factor of the form $v^m,$ so Player's 2 reply will be at the identical position in the corresponding neighborhood, and thus at position $j$ of the central factor in the other word.  Thus the two moves jumped over precisely the same set of letters. If $j>|v|,$ then Player 1's move jumps over all the letters of $v.$  Since Player 2's response, owing to the condition on neighborhoods, cannot be within the first $|v|$ letters of the central factor of the other word, her move too must jump over all the letters of $v.$ The same argument applies to moves from the central region into a peripheral factor.

This shows that Player 2's strategy is successful for the first $k$ rounds, but we must also show that Player 2 can extend the winning strategy for one additional round. If after the first $k$ rounds, the two pebbles are in peripheral factors, then Player 1's next move is either within a single peripheral factor, from one peripheral factor to another, or from a peripheral factor into the central region.  In all these cases Player 2 responds exactly as she would have during the first $k$ rounds. As we argued above, this response is a legal move.   

If Player 1 moves from the central region to one of the peripheral factors, Player 2 will move to the same position in the corresponding peripheral factor in the other word.  Again, we argue as above that this is a legal move.

The crucial case is when Player 1's move is entirely within one of the central factors. Now we can no longer use the winning strategy in the game in the central factors to determine Player 2's response, because she has run out of moves. We may suppose, without loss of generality, that Player 1's move is toward the left. We consider whether or not the set of letters that Player 1 jumps consists of all the letters in $B.$  If it does, then Player 2 can just locate a matching neighborhood somewhere in the left peripheral factor in the other word and move there.  In the process, Player 2 also jumps to the left over all the letters of $B.$

What if Player 1 jumps over a proper subset of the letters of $B$?   Player 2 responds by moving the same distance in the same direction in the other word. Why does this work? After $k$ rounds, the pebbles in the two words must be on the same letters and in the same positions within matching $(m+1)$-neighborhoods, or outside of any $(m+1)$-neighborhood.  Thus we cannot have the situation where, for example, the pebble in one word is on a letter $b_i$ within a factor ${\bf b},$ and in the other word on the same letter $b_i$ within a factor $v.$  If Player 1 moves the pebble from a factor $v$ into a factor ${\bf b}$ or ${\bf a}$---let us say ${\bf b}$--- then the original position of the pebble was within a 1-neighborhood.  Therefore Player 2's pebble was in the corresponding position in a matching 1-neighborhood, so the response moves this pebble to ${\bf b}$ as well, and thus jumps over the same set of letters. (This is the only potentially problematic case.)  How can we ensure that the $m$-neighborhoods match up correctly after this move?  If  the move takes Player 1's pebble into an $m$-neighborhood, then the move stayed within a single $(m+1)$-neighborhood.  This means that the original position of Player 2's pebble was in the corresponding position of an identical $(m+1)$-neighborhood, so that after Player 2's response, the two pebbles are in matching $m$-neighborhoods.

\end{proof}

\subsection{Proof of Corollary~\ref{cor.strictinclusion}}

Let 
$L= (a(ab)^*b)^*.$
It is easy to construct a sentence of $FO[<]$ defining $L.$ We can give a more algebraic proof by working directly with the minimal automaton of $L$ and showing that its transition monoid is aperiodic.  This automaton has three states $\{0,1,2\}$ along with a dead state.  For each state $i=0,1,2$ we define $i\cdot a=i+1,$ $i\cdot b=i-1,$ where this transition is understood to lead to the dead state if $i+1=3$ or $i-1=-1.$ The transition monoid has a zero, which is the transition mapping every state to the dead state. It is then easy to check that every $m\in M(L)$ is either idempotent, or has $m^3=0,$ and thus $M(L)$ is aperiodic. 
We now show that $L$ cannot be expressed in \fotwobet.  By Theorem~\ref{thm.maintheorem}, if  $L$ is expressible then $M(L)\in {\bf M}_e{\bf DA}.$ As before, we will denote the image of a word $w$ in $M(L)$ by $\overline {w}.$ Easily, $e=\overline{ab}$ is an idempotent in $M(L),$ and both $m=\overline{(ab)a(ab)}$ and $f=\overline{(ab)a(ab)b(ab)}$ are elements of $e\cdot M(L)_e\cdot e$ with $f$ idempotent and $f\leq_{\cal J} m$ in $e\cdot M(L)_e\cdot e.$  So if $e\cdot M(L)_e\cdot e\in{\bf DA}$ we would have
$fmf=f.$  Now $f$ is the transition that maps state 0 to 0 and all other states to the dead state, and $m$ is the transition that maps 0 to 1 and all other states to the dead state.  Thus $fmf=0\neq f,$ so $M(L)\notin {\bf M}_e{\bf DA}.$

\subsection{Proof of sufficiency in Theorem~\ref{thm.maintheorem} for two-letter alphabets}

 We sketch the proof that languages over a two-letter alphabet whose syntactic monoids are in ${\bf M}_e{\bf DA}$ are definable in \fotwobet, which will complete the proof of the main theorem. This relies heavily on an algebraic theory of finite categories developed by Tilson~\cite{Tilson}.  The category ${\cal C}(\phi)$ and the congruence $\cong$ described below were first introduced by Straubing~\cite{Str-dd2} in the study of languages of dot-depth 2.

In this and the following sections, we assume that $A=\{a,b\},$ that $M\in{\bf M}_e{\bf DA}$ and that $\phi:A^*\to M$ is a homomorphism onto $M.$  Our ultimate goal is to show that $\phi^{-1}(m)$ is definable in \fotwobet\/ for all $m\in M.$  This implies that every language recognized by a monoid in $M\in{\bf M}_e{\bf DA}$ is definable in the logic, and thus that $L$ is definable if $M(L)\in{\bf M}_e{\bf DA}$.

\subsubsection{Background on finite categories}

A {\it category} ${\cal C}$ consists of a set $Obj({\cal C})$ of {\it objects}, and, for each $c,d\in Obj({\cal C}),$ a set $Arr(c,d)$ of {\it arrows} from $c$ to $d.$  Given  a pair of consecutive arrows 
$$c\xrightarrow{u} d, d\xrightarrow{v} e,$$
there is a product arrow
$$c\xrightarrow{uv} e.$$
The product of nonconsecutive arrows is not defined.  We will sometimes denote arrows in this way by showing their start and end objects, and sometimes just write them as $u,$ $v,$ $uv,$ {\it etc.}. For each $c\in Obj({\cal C})$ there is an arrow $c\xrightarrow{1_c} c$ that is a right identity for arrows ending at $c$ and a left identity for arrows starting at $c.$  This is the traditional definition of a category, but we are not doing `category theory' in the traditional sense:  We will only consider categories in which both the object set and every arrow set is finite, and essentially treat categories as generalized monoids.  Note that a monoid is the same thing as a category with a single object.  Furthermore, for each object $c$ of a category ${\cal C},$ the set $Arr(c,c)$ of arrows that both begin and end at $c$ is a finite monoid, called the {\it base monoid} at $c.$

There is a preorder on categories called {\it division} and denoted $\prec.$  The ingredients of a division ${\cal C}\prec {\cal D}$ are an object map $\tau:Obj({\cal C})\to Obj({\cal D}),$ and for each arrow $u\in Arr(c,d),$ a set of arrows in $Arr(\tau(c),\tau(d))$ that are said to {\it cover} $u.$ The covering relation is required to satisfy a number of properties: 
\begin{enumerate}
\renewcommand\labelenumi{(\roman{enumi})}
\item An arrow in $Arr(\tau(c),\tau(d))$ can cover at most one arrow in $Arr(c,d).$  (It might  cover several different arrows with different starting and ending objects.) 
\item The covering relation is multiplicative:  If $\hat u$ covers $u$ and $\hat v$ covers $v,$ then the product arrow $\hat u\hat v$ covers $uv.$
\item For each $c\in Obj({\cal C}),$ $1_c$ is covered by $1_{\tau(c)}.$
\end{enumerate}

An important special case occurs when the category ${\cal D}$ is a monoid, and an even more special case when ${\cal C}$ and ${\cal D}$ are both monoids.  In this instance, category division reduces to monoid division:  ${\cal C}$ is a quotient of a submonoid of ${\cal D}.$

We will employ a special property of the monoid variety {\bf DA} with respect to categories:  It is {\it local}.  This means that if every base monoid of a category ${\cal C}$ divides a monoid in ${\bf DA},$ then ${\cal C}$ itself divides a monoid in {\bf DA} (see Almeida~\cite{Almeida}, Place and Segoufin~\cite{PS}).

\subsubsection{The category ${\cal C}(\phi)$} 

 The objects of  ${\cal C}(\phi)$ are pairs $(e,X),$ where $e\in M$ is idempotent, $X\subseteq A,$ and $e=\phi(w)$ for some word $w\in A^*$ with $\alpha(w)=X.$  (We write $\alpha(w)$ for the set of letters of $w.$)
An arrow  from $(e,X)$ to $(f,Y)$ is represented by a triple

$$(e,X)\stackrel{w}{\longrightarrow}(f,Y),$$

where $w\in A^*$ and $Y=X\cup\alpha(w).$

Two such arrows
$$(e,X)\xrightarrow{w_1,w_2}(f,Y)$$
are identified if 
$$e\cdot\phi(w_1)\cdot f=e\cdot\phi(w_2)\cdot f.$$
(That is, an arrow is actually an equivalence class modulo this identification.)

We multiply arrows by the rule
$$(e,X)\stackrel{w_1}{\longrightarrow}(f,Y)\stackrel{w_2}{\longrightarrow}(g,Z)$$ is equal to $$(e,X)\stackrel{w_1uw_2}{\longrightarrow}(g,Z),$$
where $u\in A^*,$ $\alpha(u)=Y,$ $\phi(u)=f.$

It is straightforward to verify that this multiplication is well-defined and associative, and that
$$(e,X)\xrightarrow{u}(e,X),$$
where $\alpha(u)=X$ and $\phi(u)=e,$ represents the identity arrow $1_{(e,X)}.$  Thus ${\cal C}(\phi)$ is indeed a category.

The base monoid at $(e,X)$ in ${\cal C}(\phi)$ is in $eM_ee.$ Since $M\in{\bf M}_e{\bf DA},$ the result about locality of {\bf DA} cited above implies that ${\cal C}(\phi)$ divides a monoid in {\bf DA}.

\subsubsection{A congruence on $\{a,b\}^*$}

We describe an equivalence relation on $\{a,b\}^*.$   Since $M\in{\bf M}_e{\bf DA},$ $M$ is aperiodic, and therefore if $T>|M|,$ $m^T=m^{T+1}$ for all $m\in M.$

Given a word $w\in\{a,b\}^*$ we factor it into maximal factors each of which consists entirely of $a$'s or entirely of $b$'s. We call these factors {\it sub-blocks}.  For example,
$$w=a^{k_r}b^{\ell_{r}}\cdots a^{k_1}b^{\ell_1}.$$
In this example the word begins with a sub-block of $a$'s and ends with a sub-block of $b$'s, but of course any of the four possibilities can occur.  We call  the factors $a^{k_i}b^{\ell_i}$ the {\it blocks} of the factorization. In case the number of sub-blocks is odd, the leftmost block will be incomplete, in the sense that it contains only $a$'s or only $b$'s. Observe that we number the blocks from right to left, and call the rightmost block the first block.

Let $w,w'\in \{a,b\}^*.$  We define $w\cong w'$ if the following conditions hold:

\begin{itemize}
\item The number of blocks in $w$ and the number of blocks in $w'$ are equivalent threshold $T.$
\item The rightmost letters of $w,w'$ are the same.  
\item Let $S$ be the number of blocks in $w.$  Let $1\leq i\leq \min(S,T),$ and let $a^{k_i}b^{\ell_i},$ $a^{k'_i}b^{\ell'_i}$ be the $i^{th}$ blocks of $w,w',$ respectively.  Then $k_i$ and $k'_i$ are equivalent threshold $T,$ and $\ell_i$ and $\ell'_i$ are equivalent threshold $T.$  (This assumes that the rightmost letter of both words is $b,$ so that blocks have the form $a^kb^{\ell},$ but the analogous definition is made if the rightmost letter is $a$.)
\end{itemize}

We have the following easy lemma.

\begin{lemma}\label{lemma.congcong} The equivalence relation $\cong$ is a congruence of finite index on $\{a,b\}^*.$
\end{lemma}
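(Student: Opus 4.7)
My plan is to prove the two assertions independently. For finite index I would argue by direct enumeration: a $\cong$-class is specified by (i) the rightmost letter of the word (two options, together with a singleton class for the empty word), (ii) the number of blocks taken threshold $T$ (at most $T+1$ values), and (iii) for each of the first $\min(S,T)$ blocks, a pair of threshold-$T$ counts giving the sizes of the $a$- and $b$-sub-blocks. Since each datum ranges over a finite set and at most $T$ blocks are described, the number of $\cong$-classes is bounded by a function of $T$ alone, which gives finite index.

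For the congruence property, fix $w_1 \cong w_1'$ and $w_2 \cong w_2'$. After handling the trivial cases in which one of the factors is empty, I would prove the stronger statement that the $\cong$-class of $w_1 w_2$ is a function of the $\cong$-classes of $w_1$ and $w_2$. The rightmost letter of $w_1 w_2$ is the rightmost letter of $w_2$, which is recorded in $[w_2]_\cong$. Whether the rightmost sub-block of $w_1$ merges with the leftmost sub-block of $w_2$ depends only on the rightmost letter of $w_1$ and the leftmost letter of $w_2$: the former is recorded in $[w_1]_\cong$, while the latter can be inferred from $[w_2]_\cong$, since the completeness pattern of the leftmost block of $w_2$ is visible in the block data $(k_{S_2},\ell_{S_2})$ when $S_2 \le T$ (and when $S_2 > T$ the leftmost sub-block of $w_2$ lies far to the left of the first $T$ blocks and so has no bearing on the threshold-$T$ data of $w_1 w_2$).

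From here I would split on $S_2$ relative to $T$. If $S_2 \ge T$, then the first $T$ blocks of $w_1 w_2$ lie entirely inside $w_2$ and coincide with the first $T$ blocks of $w_2$, while the total block count of $w_1 w_2$ is at least $T$; the same holds for $w_1' w_2'$, so the two concatenations agree in all three data defining $\cong$. If $S_2 < T$, then the first $S_2-1$ blocks of $w_1 w_2$ equal those of $w_2$; the $S_2$-th (``seam'') block is either unchanged from $w_2$ or has one of its sub-block counts augmented by the rightmost sub-block count of $w_1$, a quantity which is part of the first block of $w_1$ and is hence determined threshold $T$ by $[w_1]_\cong$; the subsequent blocks of $w_1 w_2$ are then obtained from the blocks of $w_1$, shifted by $S_2$ or $S_2-1$ depending on whether a merge took place. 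In every case the threshold-$T$ block data of $w_1 w_2$ is computable from that of $w_1$ and $w_2$, using that threshold-$T$ addition is well-defined on threshold-$T$ values. Running the same recipe on $(w_1',w_2')$ yields the same data, so $w_1 w_2 \cong w_1' w_2'$.

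The main obstacle will be the bookkeeping at the seam: I expect a case split over the four letter-pairings at the junction, the parity of the sub-block count of $w_2$ (which controls the completeness of its leftmost block and hence whether the merge alters an $a$-count or a $b$-count), and a couple of borderline situations where $S_2$ is close to $T$. Each subcase should reduce, after sufficient unpacking, to the elementary fact that threshold-$T$ sums of threshold-$T$ values are well-defined, which is the structural reason such rightmost-block descriptions are indeed congruences whenever the threshold is respected consistently across all data.
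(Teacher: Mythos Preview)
Your proposal is correct, and in fact considerably more detailed than what the paper does: the paper offers no proof whatsoever, simply labelling the lemma ``easy'' and moving on. Your outline---bounding the number of classes by enumerating the finite data, then showing that the $\cong$-class of a product is a function of the $\cong$-classes of the factors via a seam analysis---is the natural direct argument.

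One small imprecision worth flagging: your statement that ``the subsequent blocks of $w_1 w_2$ are then obtained from the blocks of $w_1$, shifted by $S_2$ or $S_2-1$'' is literally true only when the rightmost letters of $w_1$ and $w_2$ agree. When they differ, the blocks of $w_1$ are of the form $b^k a^\ell$ while those of $w_1 w_2$ are of the form $a^k b^\ell$ (or vice versa), so the sub-blocks of $w_1$ get \emph{re-paired}: block $S_2+j$ of $w_1 w_2$ then mixes the $b$-count of block $j$ of $w_1$ with the $a$-count of block $j+1$ of $w_1$. This is still computable from the first $T$ blocks of $w_1$ (since $S_2 \ge 1$ forces the needed index to stay within $T$), so your conclusion stands---and you already anticipate exactly this kind of wrinkle in your final paragraph about the four letter-pairings and parity cases. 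Just be aware that ``shifted'' should become ``re-paired'' in two of the four cases when you write out the details.
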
 

We let $K$ denote the quotient monoid $A^*/\cong$ and $\psi:A^*\to K$ the projection homomorphism onto this quotient.

\subsubsection{The derived category}

We define another category $D$, called the {\it derived category}, that depends on the homomorphisms $\phi:A^*\to M,$ $\psi:A^*\to K.$  (In Tilson's formulation, this is called the derived category of the {\it relational morphism} $\psi\phi^{-1}:M\to K.$) 

The objects of $D$ are elements of $K.$  The arrows are represented by triples
$$k\stackrel{v}{\longrightarrow}k\cdot\psi(v),$$
where $k\in K$ and $v\in A^*.$  Two such triples
$$k\xrightarrow{v_1,v_2}k\cdot\psi(v_1)=k\cdot\psi(v_2)$$
are identified if for all words $u$ such that $\psi(u)=k,$ $\phi(uv_1)=\phi(uv_2).$
Triples are composed in the obvious way:  the product
$$k\xrightarrow{v_1}k\cdot\psi(v_1)\xrightarrow{v_2}k\psi(v_1v_2)$$
is represented by the triple
$$k\xrightarrow{v_1v_2}k\cdot\psi(v_1v_2).$$
It is straightforward to verify that this composition law is associative, respects the equivalence of arrows. and that $k\stackrel{1}{\longrightarrow} k$ is a left and right identity at the object $k.$  So this is indeed a category.

The crucial property of $D$ that we will use is: 

\begin{lemma}\label{lemma.derivedcategory}
$D$ divides a monoid in {\bf DA}.
\end{lemma}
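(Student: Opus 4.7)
My plan is to reduce to base monoids using locality of ${\bf DA}$ (cited from Almeida and Place--Segoufin): since $D$ is a finite category, it divides a monoid in ${\bf DA}$ iff every base monoid $B_k := Arr(k,k)$ does. Fix $k \in K$; I will show $B_k \in {\bf DA}$ by verifying the defining identity $(xy)^\omega x (xy)^\omega = (xy)^\omega$ for arbitrary $x, y \in B_k$.

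Let $v_x, v_y \in A^*$ represent $x, y$, with $k\psi(v_x) = k\psi(v_y) = k$. Since $M$ is aperiodic (as ${\bf M}_e{\bf DA} \subseteq {\bf Ap}$), the element $e := \phi((v_xv_y)^N) = \phi(v_xv_y)^\omega$ is idempotent for $N$ large. Because $e$ is ${\cal J}$-below both $\phi(v_x)$ and $\phi(v_y)$, these elements lie in $M_e$, so $\tilde x := e\phi(v_x)e$ and $\tilde y := e\phi(v_y)e$ belong to $eM_ee$. By the hypothesis $M \in {\bf M}_e{\bf DA}$ we have $eM_ee \in {\bf DA}$, and applying the ${\bf DA}$ identity to $\tilde x, \tilde y$ yields $(\tilde x\tilde y)^\omega \tilde x (\tilde x\tilde y)^\omega = (\tilde x\tilde y)^\omega$. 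Setting $g := (\tilde x\tilde y)^\omega \in eM_ee$ and absorbing the $e$'s on both sides of $\tilde x$ via $eg = ge = g$, this simplifies to the key $M$-level identity $g\phi(v_x)g = g$.

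To lift this to an identity in $B_k$, I pick a word $\hat w$ with $\phi(\hat w) = g$ that lies in the submonoid $V_k := \{v : k\psi(v) = k\}$. Concretely, $\hat w := ((v_xv_y)^N v_x (v_xv_y)^N v_y (v_xv_y)^N)^L$ works for $L$ large. Then for every test word $z$ with $\psi(z) = k$, $\phi(z\hat w v_x \hat w) = \phi(z)\, g\phi(v_x) g = \phi(z)\, g = \phi(z\hat w)$, and hence $[\hat w v_x \hat w] = [\hat w]$ in $B_k$. By symmetry $[\hat w v_y \hat w] = [\hat w]$ as well. Since $B_k$ is finite and aperiodic (as a quotient of a submonoid of the aperiodic $M$), the class $[\hat w]$ is an idempotent.

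The main obstacle is the last identification: showing that $[\hat w]$ coincides with $(xy)^\omega$ in $B_k$. Unfolded, $[\hat w]$ is the $L$-th power of $E x E y E$ where $E := (xy)^\omega$, and the claim $F := [\hat w] = E$ is not obvious since $g \ne e$ in $M$. I expect to close this gap by the following route: the equations $FxF = F$ and $FyF = F$ show that $F$ is an idempotent in $EB_kE$ absorbing $x$ and $y$ from both sides, hence also absorbing any product of $x$'s and $y$'s; consequently $F$ acts as the identity on $E B_k E$, forcing $F = E$ since $E$ is itself the identity of $EB_kE$. This collapse, combined with $[\hat w v_x \hat w] = [\hat w]$, yields $(xy)^\omega x (xy)^\omega = (xy)^\omega$ in $B_k$, and locality of ${\bf DA}$ then gives the required division of $D$ into a monoid in ${\bf DA}$.
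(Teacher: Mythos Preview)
Your argument has a genuine gap at exactly the step you flag as the ``main obstacle'': the claim that $F=E$. You write that $FxF=F$ and $FyF=F$ make $F$ ``act as the identity on $EB_kE$,'' but absorption is the opposite of acting as the identity---what you have shown is that $F$ behaves like a \emph{zero} for $x$ and $y$, i.e.\ $FzF=F$ for every $z$ in the submonoid generated by $x,y$. That gives $FEF=F$, which you already knew, and says nothing about $ExE$.

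More tellingly, your argument never uses the particular congruence $\cong$ that defines $K$. If it were valid it would apply verbatim with $K$ trivial, in which case $D$ has a single object and $B_k\cong M$. Take $M$ to be the syntactic monoid of $(ab)^*$ (which the paper shows lies in ${\bf M}_e{\bf DA}\setminus{\bf DA}$) and $x=a$, $y=b$. Then $E=(ab)^\omega=ab$, while $ExE=ab\cdot a\cdot ab=a\cdot ab=0$, so $ExEyE=0$ and $F=g=0\neq E$. Thus $F=E$ fails outright, and the identity $(xy)^\omega x(xy)^\omega=(xy)^\omega$ does \emph{not} hold in this $B_k$. Any correct proof must exploit what is special about $\cong$.

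The paper's route is different in structure. It does not analyse the base monoids of $D$ directly. Instead it introduces the auxiliary category ${\cal C}(\phi)$, whose base monoid at an object $(e,X)$ sits inside $eM_ee$ and is therefore in ${\bf DA}$ by hypothesis; locality of ${\bf DA}$ then yields that ${\cal C}(\phi)$ divides a monoid in ${\bf DA}$. The remaining work---showing that $D$ divides ${\cal C}(\phi)$---is delegated to Straubing's earlier paper and is precisely where the design of $\cong$ enters: the $\cong$-class of a prefix $u$ records enough of the block structure of $u$ (over the two-letter alphabet) to pin down an appropriate object $(e,X)$ of ${\cal C}(\phi)$, which is what makes the division go through. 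If you want to repair your approach, you will need to bring that structural information about $\cong$ into the computation inside $B_k$; the purely monoid-theoretic manipulations you perform in $M$ cannot suffice on their own.
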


The proof of this lemma is given in Straubing~\cite{Str-dd2}, where both the category ${\cal C}(\phi)$ and the congruence $\cong$ are introduced.

\subsubsection{Translation into logic}\label{section.translation}

 Let $w=a_1\cdots a_k\in A^*.$  Consider the path traced out in the derived category $D$ by this word, starting at the object $1=\psi(1)$: 
$$1\xrightarrow{a_1}\psi(a_1)\xrightarrow{a_2}\psi(a_1 a_2)\cdots \xrightarrow{a_k}\psi(a_1\cdots a_{k}).$$
We view this sequence of arrows as a word over the alphabet $B$ of all arrows in $D.$  This defines a length-preserving function (but not a homomorphism) $\xi:A^*\to B^*.$  By Lemma~\ref{lemma.derivedcategory}, $D$ divides a monoid $N\in {\bf DA},$ so we can map each arrow $b\in B$ to an element $\theta(b)$ that covers it.

Let $w,w'\in A^*.$ We claim that if both $\psi(w)=\psi(w'),$ and $\theta\xi(w)=\theta\xi(w'),$ then $\phi(w)=\phi(w').$ This is because the product 
$(1,w,\psi(w))$ of the sequence of arrows $\xi(w)$ is covered by $\theta\xi(w),$ which consequently covers the product $(1,w',\psi(w'))$ of the arrows in $\xi(w').$  As these products have the same start and end object, they are accordingly equal, which implies $\phi(w)=\phi(w').$ 

Thus if $m\in M,$ the language $\phi^{-1}(m)$  is a finite union of sets of the form
$$\{w\in\Sigma^*:\psi(w)=k,\theta\xi(w)=n\},$$
for $n\in N$ and $k\in K.$  We will show that each of these sets of words is definable in \fotwothr, and thus, by Theorem~\ref{thm.invequalsthr}, by a sentence of \fotwobet.
To express $\psi(w)=k,$ we need to be able to describe a $\cong$-class in \fotwothr. A single example will illustrate the general procedure.  Suppose that $T=3$ and we want to write a sentence satisfied by a word $w$ if and only if $w$ is congruent to
$$a^4b^2a^5ba^2b^4a.$$
This word has four blocks, and the leftmost block is incomplete, but only the three rightmost blocks matter for the congruence class. To describe the class, we first say that we jump left from the right-hand end of of word, over a single $a$, to the rightmost $b$:
$$\exists x(b(x)\wedge\forall y(y>x\rightarrow (a(y)\wedge\neg a(x,y)\wedge\neg b(x,y))).$$
We now jump from the rightmost $b$ over no $a$'s and at least one $b$ to the leftmost $b$ of the first block.  Observe that the condition is `at least one $b$,' because we want to express that the size of this sub-block is at least 3.  Thus our sentence has now grown to
$$\exists x(b(x)\wedge\forall y(y>x\rightarrow (a(y)\wedge\neg a(x,y)\wedge\neg b(x,y)))\wedge\gamma(x)).$$ 
where $\gamma(x)$ is
$$\exists y(y<x\wedge (b,1)(y,x)\wedge\neg a(y,x)).$$
We continue in this manner, now jumping to the rightmost $b$ of the next block, and then the leftmost $b$ of the same sub-block.  Note that we grow the sentence from the outside in, at each step adding another existential quantifier and atomic formulas of \fotwothr.  When we reach the leftmost letter of the third block we will have to add that either we are on the leftmost letter of the word, or that the letter immediately to the left is $a.$  The end result is a sentence of \fotwothr   defining a congruence class of $\cong,$ and we can treat every class similarly. %

How do we express that $\theta\xi(w)=n$?  Since $\theta$ is a morphism into $N\in {\bf DA},$ the set of $v\in B^*$ such that $\theta(v)=n$ is expressed by a sentence of $FO^2[<]$ over the alphabet $B.$  To convert this into a sentence over $A,$ we have to translate each atomic formula $(k,a,k\cdot\psi(a))(x)$ into a formula of \fotwothr\/ over $A.$  We write this as 
$a(x)\wedge \alpha(x),$
where $\alpha$ says that the word consisting of letters to the left of $x$ belongs to the $\cong$-class $k.$  We then proceed to write $\alpha$ as formula over \fotwothr, following the same procedure we used above to obtain sentences describing $\cong$-classes. In fact, this formula is simpler, in terms of alternation depth, than the one we produced earlier: Using the same $\cong$-class as in our example, our formula construction will begin with
$$\exists y(y<x\wedge (a,1)(y,x)\wedge \neg(a,2)(y,x)\wedge\neg(b,1)(y,x)).$$
The result is a sentence of \fotwothr defining $\phi^{-1}(m),$ as required.

\subsection{Proof of Theorem~\ref{thm.alternationdepth}}

We first show that \fotwobet\/ contains languages of arbitrarily large alternation depth.

Consider an alphabet consisting of the symbols
$$0,1,\vee_1,\wedge_2,\vee_3,\wedge_4,\cdots.$$
We define a sequence of languages by regular expressions as follows:
$$C_1=\vee_1(0+1)^+$$
$$T_1=\vee_1(0+1)^*1(0+1)^*.$$
For even $m>1,$
$$C_m=\wedge_mC_{m-1}^+$$
$$T_m=\wedge_mT_{m-1}^+.$$
For odd $m>1,$
$$C_m=\vee_mC_{m-1}^+$$
$$T_m=\vee_mC_{m-1}^*T_{m-1}C_{m-1}^*.$$

Observe that $C_m$ and $T_m$ are languages over a finite alphabet of $m+2$ letters.  $C_m$ denotes the set of prefix encodings of depth $m$ boolean circuits with 0's and 1's at the inputs.  In these circuits the input layer of 0's and 1's is followed by a layer of unbounded fan-in OR gates, then alternating layers of unbounded fan-in AND and OR gates (strictly speaking, these circuits are trees of AND and OR gates).  $T_m$ denotes the set of encodings of those circuits in $C_m$ that evaluate to {\it True.}

 We obtain a sentence  defining $C_1$ by saying that the first symbol is $\vee_1$ and every symbol after this is either 0 or 1.  
$$\exists x( \vee_1(x) \wedge \forall y(x\leq y \wedge x<y\rightarrow (1(y) \vee 0(y)))).$$
We obtain a sentence for $T_1$ by taking the conjunction of this sentence with $\exists x 1(x).$

For the inductive step, we suppose that we have a sentence defining $T_k$ for $k\geq 1.$  Let's suppose first that $k$ is odd.  Thus $T_k$ is a union of $\equiv_r$-classes for some $r,$ which we assume to be at least 2. Consequently, whenever $v\in T_k$ and $v'\notin T_k,$ Player 1 has a winning strategy in the $r$-round game in $v$ and $v'.$  The proof now proceeds by showing that whenever $w\in T_{k+1}$ and $w'\notin T_{k+1},$ Player 1 has a winning strategy in the $(r+1)$-round game in these two words.  This implies that the $\equiv_r$-class of $w$ is contained within $T_{k+1},$ and thus $T_{k+1}$ is a union of such classes, and hence definable by a sentence in our logic.

 The argument for the case where $k$ is even, and for the classes $C_k,$ is identical.  Observe that since $T_1$ is defined by a formula of quantifier depth 2, we can take $r=k+1.$

It remains to prove the claim about unbounded alternation depth.  It is possible to give an elementary proof  of this using games. However, by deploying some more sophisticated results from circuit complexity, we can quickly see that the claim is true. Let us suppose that we have a language $L\subseteq\{0,1\}^*$ recognized by a constant-depth polynomial-size family of unbounded fan-in boolean circuits; that is, $L$ belongs to the circuit complexity class $AC^0.$  We can encode the pair consisting of a word $w$ of length $n$ and the circuit for length $n$ inputs by a word $p(w) \in C_n.$ We now  have $w\in L$ if and only if $p(w)\in T_n.$  

Now if the alternation depth of all the $T_n$ is bounded above by some fixed integer $d,$  then we can recognize every $T_n$ by a polynomial-size family of circuits of depth $d.$  We can use this to obtain a polynomial family of circuits of depth $d$ recognizing $L.$  This contradicts the fact (see Sipser~\cite{Sipser}) that the required circuit depth of languages in $AC^0$  is unbounded.

To prove the claim about bounded alternation depth in the two-letter case, we re-examine the construction of the formulas in Section~\ref{section.translation}.   The formula defining the condition $\theta\xi(w)=n$ is constructed by writing a sentence of $FO^2[<]$ over a base that includes the atomic formulas $(k,a,k\cdot\psi(a))(x)$.  The formula $\alpha(x)$ used to express this atomic formula is in $\Sigma_2[<].$  The $FO^2[<]$ sentence itself can be replaced (Th\'erien and Wilke~\cite{TW}) by an equivalent $\Sigma_2[<]$ sentence or a $\Pi_2$ sentence, so the result is a $\Pi_3[<]$ sentence defining $\theta\xi(w)=n.$  The formula sentence defining $\psi(w)=k$ can itself be replaced by a $\Sigma_2[<]$ or $\Pi_2[<]$ sentence, because $K$ is ${\cal L}$-trivial and thus itself in {\bf DA}. 
\section*{Acknowledgments}
Much of this research was carried out while the various authors were guests of the Tata Institute for Fundamental Research in Mumbai, the Chennai Mathematical Institute and the Institute of Mathematical Sciences in Chennai, and the University of Montreal, and participated in the Dagstuhl Seminar `Circuits, Logic and Games' in September, 2015. 

\bibliographystyle{abbrvnat}

\end{document}